\def\Par{\mathcal{P}} 
\def\Parp{\Par(P)}
\def\Forb{\mathcal{A}}  
\def\Auth{\mathcal{B}} 
\def\Cov{\mathcal{C}}  
\def\Fam{\mathcal{B}} 
\def\s{\setminus}
\def\cl{\mathrm{cl}}
\theoremstyle{plain}
\newtheorem{proposition}{Proposition}
\newtheorem{theorem}[proposition]{Theorem}
\newtheorem{lemma}[proposition]{Lemma}
\theoremstyle{definition}
\newtheorem{definition}[proposition]{Definition}
\newtheorem{example}[proposition]{Example}
\newtheorem{remark}[proposition]{Remark}
\begin{document}
\title{Privacy-preserving Data Splitting: A Combinatorial Approach}
\author{Oriol Farr\`as, Jordi Ribes-Gonz\'alez, Sara Ricci\\%
\texttt{\{oriol.farras,jordi.ribes,sara.ricci\}@urv.cat}}
\affil{Department of Mathematics and Computer Science, \\Universitat Rovira i Virgili,\\ Tarragona, Spain.}
\date{\today}


\maketitle

\begin{abstract}
Privacy-preserving data splitting is a technique that aims to protect data privacy by storing different fragments of data in different locations. 
In this work we give a new combinatorial formulation to the data splitting problem. We see the data splitting problem as a purely combinatorial problem, in which we have to split data attributes into different fragments in a way that satisfies certain combinatorial properties derived from processing and privacy constraints. Using this formulation, we develop new combinatorial and algebraic techniques to obtain solutions to the data splitting problem. We present an algebraic method which builds an optimal data splitting solution by using Gr\"{o}bner bases. Since this method is not efficient in general, we also develop a greedy algorithm for finding solutions that are not necessarily minimal sized.  
\end{abstract}

\section{Introduction}


Data collected by companies and organizations is increasingly large and it is nowadays unfeasible for some data owners to
locally store and process it because of the associated costs (such as hardware, energy and maintenance costs). 
The cloud offers a suitable alternative for data storage, by providing large and highly scalable storage and computational resources 
at a low cost and with ubiquitous access. However, many data owners are reluctant to embrace the cloud computing technology 
because of security and privacy concerns, mainly regarding the cloud service provider (CSP). The problem is not only that 
CSPs may read, use or even sell the data outsourced by their customers; but also that they may suffer attacks or data leakages that can compromise data confidentiality.

Privacy-preserving data splitting is a technique that aims to protect data privacy in this setting. 
Data splitting minimizes the leakage of information by distributing the data among several CSPs,
assuming that they do not communicate with each other. Similar problems have been studied in other areas such as data mining, data sanitization, file splitting and data merging.

In general, in data splitting data sets are structured in a tabular format, according to a set of \emph{attributes} (or \emph{features}) identifiable by attribute names, as the table schema.
Data is then composed by \emph{records}, where each record holds up to one value per attribute. 
For instance, we can consider the attributes \verb|`Name',`Age',`Occupation'|, and a 
record \verb|{`John',`21',`Student'}|, where the 
record holds values for all attributes.

Data splitting comes in three flavours: horizontal, vertical and mixed. 
In this work, we deal with vertical data splitting, where fragments consist of data on all records, but only contain information on a subset of the attributes. In horizontal data splitting, fragments contain part of the records, and information on all attributes is specified. In mixed data splitting, fragments hold partial information on some records. 

Horizontal data splitting is not privacy-preserving by itself,
because all the information of an individual register is stored together; hence, it does not preserve privacy by decomposition \cite{GM}. Horizontal data splitting has been used to analyze data collected by different entities on a set of patients \cite{gaye2014datashield}, or in conjunction with homomorphic encryption \cite{kantarcioglu2008survey}, to mine horizontally-partitioned data without violating privacy.

Vertical data splitting can be used for privacy-preserving purposes~\cite{GM,GM-2}. In particular, in a setting where some combinations of attributes constitute the sensitive information, the data set can be vertically split and distributed among CSPs so that no CSP holds any sensitive attribute combination. Assuming that CSPs do not communicate with each other, this measure guarantees privacy. An example of a sensitive pair of attributes in a medical data setting is passport number and disease, whereas blood pressure and disease constitute a generally safe pair. 

The results we present in this work focus on data splitting, but they can be applied to other related areas such as file splitting, data sanitization, and data merging. 

In file splitting, pieces of files owned by the same entity are stored in different sites. This is done in such a way that pieces from each site, when considered in isolation, are not sensitive. 
In \cite{abu2010racs}, the authors spread the data across multiple CSPs and introduce redundancy in order to tolerate possible 
failures or outages. Their solution follows what has been done for years at the level of disks and file 
systems, particularly in the RAID (Redundant Array of Independent Disks) technology, which strips data across an array of 
disks and maintains parity data that can be used to reconstruct the contents of any individual failed disk. 
In~\cite{dev2012approach}, user's files are categorized and split into chunks, and these chunks are provided to the proper storage servers. 
The categorization of data is done according to mining sensitivity. To ensure a greater amount of privacy, 
the possibility of adding misleading data into chunks depending on the demand of clients is given. 
Wei et al.~\cite{wei2013data} proposed a new privacy method that involves bit splitting and bit combination. 
In their approach, the original files are broken up through bit splitting and each fragment is uploaded to a different storage server.

Data sanitization is the process of removing sensitive information from a document so as to ensure that only the intended information 
can be accessed. Typically, the result is a document that is suitable for dissemination to the intended audience. Data sanitization has 
been applied along with data splitting in \cite{clarus}, where the terms in the input document that cause disclosure risk according to
the privacy requirements are first detected, and then those terms are distributed in multiple servers in order to prevent disclosure. 

Data merging consists on securely splitting and merging data from potentially many sources in a single repository. An approach for data merging is to split and compress the data into multiple fragments, and to require certain privacy constraints on the fragments~\cite{brinkman2006lucky}.  

In data splitting, once the data is split, the main issue is how to securely compute over distributed data (see \cite{yang200610} for more details). 
For some computations the servers may need to exchange data, but none of them ought to reveal its own private stored information. 
Computing over distributed data is also studied in the context of parallel processing for statistical computations. In this case, the problem is how to combine partial results obtained from independent processors. 
Related literature reduces statistical analysis to 
performing either secure distributed scalar products or secure distributed matrix products, e.g. see \cite{du2004privacy,karr2009privacy,ricci2016privacy}. On a similar note, the field of privacy-preserving data mining deals with the problem of computing over distributed data. It has as its main objective to mine data owned by different parties, who are willing to collaborate in order to get better 
results, but that do not want or cannot share the raw original data. For instance, see \cite{calvino2015privacy,clifton2002tools,goethals2004private}.

%

\subsection{Our results}

In this work we give a new combinatorial formulation to the data splitting problem. 
In the considered data splitting problem, we force some subsets of attributes to be stored separately, because the combination of these attributes may reveal sensitive information to the CSPs. Moreover, we want to impose some subsets of attributes to be stored together, because we want to query on them efficiently or to compute statistics on them (data mining, selective correlations, etc.). Regarding privacy and security, the CSPs are not trusted and hence they are not given access to the entire original data set. We thus assume that the CSPs have just access to fragments of the original data set. 

More specifically, we consider the honest-but-curious security model, where the CSPs honestly fill their role in the protocols and do not share information with each other, but they may try to infer information on the data available to them. In particular, each CSP may be curious to analyze the data it stores and the message flows received during the protocol in order to acquire additional information. Therefore, in our model the information leakage is the sensitive information that can be extracted from single stored data fragments. This model is common in the cloud computing literature, e.g. see \cite{cao2014privacy}.

In this setting, our main objective is to minimize the number of CSPs that are needed to store a data set using data splitting, without applying other privacy-preserving techniques. In order to study this problem, we set the data splitting constraints as two families of subsets of attributes: the family of subsets of attributes that have to be stored together, and the family of subsets of attributes that must not be jointly stored in any CSP. These two families respectively define processing and privacy constraints.
We define a data splitting solution as a family of subsets of attributes which satisfies the processing and privacy constraints. Each set in this family must be outsourced to a single CSP. Therefore, we see the data splitting problem as a purely combinatorial problem, in which we have to split attributes into different fragments in a way that satisfies certain combinatorial properties derived from processing and privacy constraints. 

Using this formulation, we develop new combinatorial and algebraic techniques to obtain solutions to the data splitting problem. We first present an algebraic method which builds a data splitting solution with the minimal number of fragments by using Gr\"{o}bner bases. Since this method has performance issues, we also develop an efficient greedy algorithm for finding solutions that are not generally minimal sized. We compare the efficiency and the accuracy of the two approaches by giving experimental results. Using results of graph theory, we are able to provide necessary and sufficient conditions for the existence of a solution to the data splitting problem, and we give upper and lower bounds on the number of needed fragments.


\subsection{Related work}\label{sec:related_work}

Recently, data splitting research has focused on finding the minimal sized decomposition of a given data set into privacy-preserving fragments. Related works suggest outsourcing a sensitive data set by vertically splitting it according to some privacy constraints~\cite{GM,CirianiDFJPS:07,CCFJPS,ciriani2011selective,GM-2}. In all previously proposed methods, privacy constraints are described by sensitive pairs of attributes. 

In \cite{GM}, the authors study the problem of finding a decomposition of a given data set into two privacy-preserving vertical fragments, so as to store them in two CSPs which have to be completely unaware of each other. Query execution is also optimized, i.e. they minimize the execution cost of a given query workload, while obeying the constraints imposed by the needs of data privacy. Graph-coloring techniques are used to identify a decomposition with small query costs. In particular, their data splitting problem can be reformulated as a hypergraph-coloring instance of a graph $G$. In case some sensitive attribute pairs can not be stored separately without increasing the number of fragments to more than two, encryption is used to ensure privacy. To improve the query workload, the storage of the same attribute in both 
CSPs is allowed.

The optimal decomposition problem described in \cite{GM} is hard to solve even if vertex deletion is not allowed. In fact, Guruswami et al.~\cite{guruswami2002hardness} proved that it is NP-hard to color a 2-colorable, 4-uniform hypergraph using only $c$ colors for any constant $c$. This means that, in the case that all 4-tuples of attributes are sensitive, it is NP-hard to find a partition of attributes into two sets that satisfies all privacy constraints, even knowing that it exists. Because of the hardness of this problem, in \cite{GM} they present three different heuristics to solve it. 

A later article \cite{GM-2} studies the same scenario as \cite{GM}. Here as well, they consider vertically splitting data into exactly two fragments, though their results are easily extendable to more fragments. They also allow encrypting sensitive attributes and storing the same attribute in both fragments. 
%
They introduce three heuristics to find a decomposition with small query costs. These heuristic search techniques are based on the greedy hill climbing approach, and give a nearly optimal solution.

In \cite{GM-2}, the authors study the time complexity of the proposed optimal decomposition problem in terms of the number of attributes. 
The general problem can theoretically be solved in polynomial time if the collection contains only few sets of constraints (by solving the minimum cut problem). It can also be solved in logarithmic time $O(\log(n))$ when the problem is equivalent to the hitting set problem. And it can be solved in an approximation factor of $O(\sqrt{n})$ if each constraint set has size $2$, by using directed multicat (i.e., solving the minimum edge deletion bipartition problem). The problem becomes intractable if the sets of constraints have size $3$. In fact, in this case the problem is reduced to the not-all-equal $3$-satisfiability problem, which is an NP-complete problem. 

Also, \cite{ciriani2011selective} presents a solution for vertically splitting data into two fragments without requiring the use of encryption, but rather by using a trusted party (the data owner) to store a portion of the data and to perform part of the computation. 

The solution presented in \cite{CCFJPS} uses both encryption and data splitting, but it allows the CSPs to communicate between each other. Because of this assumption, in order to ensure unlinkability between attributes, no attribute must appear in the clear in more that one fragment. 
In their solution, data is split into possibly more than two different fragments. This lowers the complexity of the problem with respect to \cite{GM} and \cite{GM-2}. The optimization problem is then to find a partition that minimizes the number of fragments and maximizes the number of attributes stored in the clear. Also, in this case, the problem of finding a partition of the attribute set is NP-complete. Hence, they present two heuristic methods with time complexity $O(n^2m)$ and $O(n^3m)$, where $m$ is the number of privacy constraints and $n$ is the number of attributes. The first one is based on the definition of vector minimality, and the second one works with an affinity matrix that expresses the advantage of having pairs of attributes in the same fragment. 

A similar approach to \cite{CCFJPS} is also illustrated in \cite{CirianiDFJPS:07}, where they split a data set into an arbitrary number of non-linkable data fragments and distribute them among an arbitrary number of non-communicating servers.

The data splitting problem studied in this work is also related to other well known combinatorial optimization problems. We want to emphasize the connection with the job shop scheduling problem. The job shop scheduling problem consists on assigning jobs to resources at particular times. Welsh and Powell~\cite{WP} described a basic scheduling problem as follows: let $J=\{J_{i}\}_{i=1}^{N}$ be a set of $N$ jobs. Suppose that it takes an entire day to complete each job, and that resources are unbounded. Let $M=\{m_{ij}\}_{i,j=1}^{N}$ be an incompatibility matrix, where $m_{ij}$ is zero or one depending on whether or not $J_{i}$ and $J_{j}$ can be carried out on the same day. The problem consists in scheduling the $n$ jobs using the minimum needed number of days according to the restrictions imposed by $M$. An efficient algorithm to solve this problem is presented in \cite{WP}, and subsequent works \cite{L,B} improve on this solution. See \cite{C} for a survey on this and similar scheduling problems. 

By interpreting jobs as attributes, days as data locations and the incompatibility matrix as a set of privacy constraints, we observe the equivalence between the problem posed in \cite{WP} and the data splitting problem. Through this same analogy, our setting extends to the following job scheduling problem: let $J=\{J_{i}\}_{i=1}^{N}$ be a set of $N$ jobs, and suppose that it takes an entire day to complete each job, and that resources are unbounded. Let $\Forb\subseteq\Par(J)$ be a family of sets of jobs that can not be carried out all on the same day. Similarly, let $\Auth\subseteq\Par(J)$ be a family of sets of jobs that must be carried out all on the same day. The problem consists in scheduling the $n$ jobs using the minimum needed number of days according to the restrictions imposed by $\Forb$ and $\Auth$.

\subsection{Outline of the work}

Section \ref{sec:comb_approach} states the problem of privacy-preserving data splitting as a purely combinatorial problem which consists
in splitting sensitive data into several fragments. This data splitting problem is stated as a covering problem. 
Section \ref{sec:algebraic_form} presents an algebraic formulation of the covering problem stated in the previous section. 
Gr\"obner basis is used to find the optimal (i.e., minimal-sized) solutions.
Section \ref{section:effalgorithm} proposes a linear-time method which solves the combinatorial problem. The solution optimality 
has been sacrificed to benefit efficiency. 
A heuristic improvement is also proposed.
Section \ref{sec:exp_results} presents the experimental results obtained by implementing
the methods presented in Sections \ref{sec:algebraic_form} and \ref{section:effalgorithm}. First, a comparison between the methods on 
a real problem is depicted. Then, a performance analysis of the linear-time methods has been carried out over random graphs.   
Finally, Section \ref{sec:conclusions} lists some conclusions.

\section{A combinatorial approach}\label{sec:comb_approach}

In this section we state the problem of privacy-preserving data splitting as a purely combinatorial problem. 
This problem consists in splitting a given data set in which some attributes are sensitive. As discussed above, this situation also covers problems of file splitting, data sanitization and data merging.
First we introduce some notation.

Let $P$ be a set and let $\Cov\subseteq \Par(P)$. For any $i\in P$, we define $\deg_i(\Cov)$ as the number of subsets in $\Cov$ containing $i$, and we define the degree $\deg(\Cov)$ of $\Cov$ as the maximum of $\deg_i(\Cov)$ for every $i\in P$. For any $B\subseteq P$, we also denote by $\deg_{B}(\Cov)$ the number of subsets $A\in\Cov$ such that $A\cap B\neq\emptyset$. Note that for any $i\in P$ we have $\deg_{i}(\Cov)=\deg_{\{i\}}(\Cov)$. For a set $A\subseteq P$, we define its closure 
$\cl(A)=\{B\subseteq P\, :\, A\subseteq B\}$. 
We define $\min{\Cov}$ and $\max{\Cov}$ as follows. A subset $A\subseteq P$ is in $\min{\Cov}$ if and only if $A\in\Cov$ and it does not exist $B\in\Cov$ with $B\subsetneq A$. Analogously, a subset  $A\subseteq P$ is in $\max{\Cov}$ if and only if $A\in\Cov$ and it does not exist $B\in\Cov$ with $A\subsetneq B$. That is, a $\min{\Cov}$ is the family of minimal subsets in $\Cov$, and $\max{\Cov}$ is the family of maximal subsets in $\Cov$.  
We say that $\Cov$ is an \emph{antichain} if $A\nsubseteq B$ for every $A,B\in\Cov$. In this case, $\Cov=\min{\Cov}=\max{\Cov}$.


In the considered data splitting setting we have a set of attributes $P$, and some combinations of the attributes 
are not to be stored by any individual server because they would leak sensitive information. We assume that individual attributes, when considered in isolation, are not sensitive (otherwise, encryption can be used). Moreover, we want some other attributes to be stored in the same location,
for example to perform statistical analysis computations such as
contingency tables, correlations or principal component analysis of the attributes.   
We thus describe a {\em data splitting problem} using two families of attributes: 
$\Forb\subseteq \Par(P)$ is the family of subsets of attributes that cannot be stored together in any single server, 
and $\Auth\subseteq \Par(P)$ is the family of subsets of attributes that we want to be stored together in some server.
We state the data splitting problem in terms of {\em coverings}, a notion first introduced in~\cite{FRR}.

\begin{definition}
\label{def:cov}
Let $\Forb,\Auth\subseteq \Par(P)$.
An $(\Forb,\Auth)$-covering $\Cov$ is a family of subsets of $P$ satisfying that
\begin{enumerate}
\item for every $A\in \Forb$ and for every $C\in\Cov$, $A\nsubseteq C$, and
\item for every $B\in \Auth$ there exists $C\in\Cov$ with $B\subseteq C$. 
\end{enumerate}
\end{definition}

Let $\Forb,\Auth$ be the families of subsets defined by the data splitting restrictions described above, and let $\Cov$ be an $(\Forb,\Auth)$-covering. Then $\Cov$ defines a solution for data splitting by associating each fragment $i$ with a set $C_i\in\Cov$. 
That is, we solve the data splitting problem by storing the data corresponding to attributes in $C_i$ at the $i$-th location. Observe that, according to this definition, for each $B\in\Auth$  there is at least one fragment containing all attributes in $B$, and none of the fragments contain all attributes in $A$ for any $A\in\Forb$. These are exactly the restrictions we have for data splitting. Note that we distribute data in as many fragments as $|\Cov|$.
Since $\Auth\subseteq \Par(P)$ is the family of subsets of attributes that we want to be stored together, we will always assume that $\cup_{B\in \Auth} B=P$. 

Our work is focused on minimizing the size of the coverings, which corresponds to the number of fragments in data splitting. Therefore, we say that $\Cov$ is an \emph{optimal}  $(\Forb,\Auth)$-covering if $|\Cov|$ is minimal among all $(\Forb,\Auth)$-coverings. 
Also, it could be desirable to minimize $\sum_{X\in \Cov} |X|$,  which corresponds to the total amount of information that will be stored, and $\max_{i\in P} \deg_i(\Cov)$, which corresponds to the maximum redundancy in the storage. 

\begin{example}
\label{e:cA}
Let $\Fam \subseteq \Par(P)$ be an antichain, and let $A\in\Fam$. Then $\Cov=\{P\s\{i\}\, :\, i\in A\}$ is a $(\{A\},\Fam\s\{A\})-covering$. 
\end{example}

Next we present some technical results about coverings. The main results of this section are Proposition~\ref{p:ecov}, which characterizes the existence of coverings, and Proposition~\ref{p:anti}, which justifies the search of $(\Forb,\Auth)$-coverings in the case that $\Forb$ and $\Auth$ are antichains. In addition, we present a theoretical lower bound on the size of $(\Forb,\Auth)$-coverings in Proposition~\ref{p:lowerbound}.

\begin{lemma}
\label{l:tcov}
Let  $\Forb,\Auth,\Cov\subseteq \Par(P)$. Then
$\Cov$ is an $(\Forb,\Auth)$-covering if and only if
\begin{enumerate}
\item \label{it:for2} 
$\cl(A)\cap \Cov=\emptyset$ for every $A\in\Forb$, and
\item \label{it:aut2} 
$\cl(B)\cap \Cov\neq \emptyset$ for every $B\in \Auth$.
\end{enumerate}
\end{lemma}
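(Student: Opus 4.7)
The plan is to observe that this lemma is essentially a direct reformulation of Definition~\ref{def:cov} in the language of closures, so the proof amounts to unpacking $\cl$ and matching the two clauses term by term. The key fact is that by definition $\cl(A)=\{B\subseteq P\,:\,A\subseteq B\}$, so for any family $\Cov\subseteq\Par(P)$ and any $A\subseteq P$, the intersection $\cl(A)\cap\Cov$ is exactly $\{C\in\Cov\,:\,A\subseteq C\}$.

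First I would handle the forbidden side. Using the characterization above, the equality $\cl(A)\cap\Cov=\emptyset$ is logically equivalent to the statement that no $C\in\Cov$ satisfies $A\subseteq C$, i.e.\ $A\nsubseteq C$ for every $C\in\Cov$. Quantifying this over $A\in\Forb$ yields precisely clause~(1) of Definition~\ref{def:cov}. Then I would handle the authorized side symmetrically: $\cl(B)\cap\Cov\neq\emptyset$ is by definition the existence of some $C\in\Cov$ with $B\subseteq C$, and quantifying this over $B\in\Auth$ gives clause~(2).

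Since each of the two new conditions is term-by-term equivalent to the corresponding clause in the definition of an $(\Forb,\Auth)$-covering, the biconditional follows immediately, in both directions simultaneously. I expect no substantive obstacle: the lemma is a convenient notational restatement that will be used later to translate covering constraints into simple membership conditions on closures, but the proof itself is a short definition chase with no case analysis or auxiliary construction required.
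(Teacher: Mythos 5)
Your proof is correct and takes essentially the same approach as the paper: both simply unpack the definition of $\cl$ to observe that $\cl(A)\cap\Cov$ is exactly the set of $C\in\Cov$ with $A\subseteq C$, making the lemma a direct restatement of Definition~\ref{def:cov}. Your version is slightly more streamlined in that it establishes the biconditional in one pass via the explicit set equality, whereas the paper argues the two implications separately, but there is no substantive difference.
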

\begin{proof}
Let $\Cov$ be an $(\Forb,\Auth)$-covering.
For every $A\in \Forb$ and for every $C\in\Cov$, $A\nsubseteq C$, and so 
for any $A'\in\cl(A)$ we have $A'\nsubseteq C$. Hence  $\cl(A)\cap \Cov=\emptyset$.
For every $B\in \Auth$, there exists $C\in\Cov$ with $B\subseteq C$, i.e. $C\in\cl(B)$. Hence $\cl(B)\cap \Cov\neq \emptyset$. This concludes the proof of one implication.

For any $A\subseteq P$, if $\cl(A)\cap \Cov=\emptyset$ then $A\nsubseteq C$ for every $C\in\Cov$. 
For any $B\subseteq P$, if $\cl(B)\cap \Cov\neq \emptyset$ then there exists $C\in\Cov$ with $B\subseteq C$. Hence the converse implication holds.
\end{proof}

As a direct consequence of this lemma, we have the following result.
\begin{lemma}
\label{l:incl}
Let  $\Forb,\Forb',\Auth,\Auth'\subseteq \Par(P)$ with $\Forb'\subseteq \Forb$ and $\Auth'\subseteq \Auth$. Every $(\Forb,\Auth)$-covering is also a $(\Forb',\Auth')$-covering.
\end{lemma}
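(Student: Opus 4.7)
The plan is to invoke Lemma \ref{l:tcov} and observe that the two defining conditions of an $(\Forb,\Auth)$-covering are each \emph{universally quantified} over the families $\Forb$ and $\Auth$. Since a universally quantified statement over a larger index set implies the same statement over any subset of that index set, shrinking $\Forb$ to $\Forb'\subseteq\Forb$ and $\Auth$ to $\Auth'\subseteq\Auth$ can only weaken the requirements, so any $\Cov$ satisfying the stronger requirements satisfies the weaker ones.

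Concretely, I would start by fixing an arbitrary $(\Forb,\Auth)$-covering $\Cov$ and applying Lemma \ref{l:tcov} in both directions. In one direction, the hypothesis gives that $\cl(A)\cap\Cov=\emptyset$ holds for every $A\in\Forb$; restricting attention to those $A$ that also lie in $\Forb'$ yields condition~(\ref{it:for2}) of the lemma for the pair $(\Forb',\Auth')$. In the other direction, the hypothesis gives that $\cl(B)\cap\Cov\neq\emptyset$ for every $B\in\Auth$; restricting to $B\in\Auth'\subseteq\Auth$ yields condition~(\ref{it:aut2}) for $(\Forb',\Auth')$. A final appeal to Lemma \ref{l:tcov} in the converse direction then shows that $\Cov$ is an $(\Forb',\Auth')$-covering.

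There is essentially no obstacle here: the statement is a monotonicity property that falls out immediately from the shape of the definition, and Lemma \ref{l:tcov} is used only to make the argument uniform across both conditions. One could equally well prove the lemma directly from Definition \ref{def:cov} without invoking Lemma \ref{l:tcov}, but routing the argument through Lemma \ref{l:tcov} matches the phrasing in the text (``as a direct consequence of this lemma'') and keeps the proof to a few lines.
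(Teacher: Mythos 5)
Your proof is correct and matches the paper's intent exactly: the paper states Lemma~\ref{l:incl} as a direct consequence of Lemma~\ref{l:tcov} without writing out details, and your argument is precisely the monotonicity observation the paper is alluding to.
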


The next proposition characterizes the pairs of subsets $(\Forb,\Auth)$ that admit $(\Forb,\Auth)$-coverings, and was presented in~\cite{FRR}.

\begin{proposition}
\label{p:ecov}
Let  $\Forb,\Auth\subseteq \Par(P)$. There exists an 
$(\Forb,\Auth)$-covering if and only if 
\begin{equation}\label{eq:exist}
A\nsubseteq B\text{ for every } A\in \Forb\text{ and }B\in\Auth.
\end{equation}
\end{proposition}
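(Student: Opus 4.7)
The plan is to prove the two directions separately, with each direction reducing quickly to Definition~\ref{def:cov}.

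For the forward direction (existence of a covering implies \eqref{eq:exist}), I would argue by contradiction. Assume $\Cov$ is an $(\Forb,\Auth)$-covering and that there exist $A\in\Forb$, $B\in\Auth$ with $A\subseteq B$. By condition~2 of Definition~\ref{def:cov}, some $C\in\Cov$ satisfies $B\subseteq C$. Then by transitivity $A\subseteq C$, contradicting condition~1 of Definition~\ref{def:cov}. So \eqref{eq:exist} must hold.

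For the backward direction, I would give an explicit construction: take $\Cov=\Auth$ itself. Then condition~2 is immediate, since for any $B\in\Auth$ the set $C=B$ lies in $\Cov$ and trivially contains $B$. For condition~1, I need $A\nsubseteq C$ for every $A\in\Forb$ and $C\in\Cov=\Auth$, which is exactly the hypothesis \eqref{eq:exist}. (Note that this construction does not need the standing assumption $\bigcup_{B\in\Auth}B=P$; it is purely set-theoretic.) One could equally well use Lemma~\ref{l:tcov} and check the conditions on $\cl(A)$ and $\cl(B)$, but the direct verification via Definition~\ref{def:cov} is shorter.

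There is no real obstacle here — both directions are essentially one-line arguments once the right covering, namely $\Cov=\Auth$, is exhibited. The only subtlety worth flagging in the write-up is that the resulting covering $\Auth$ may be far from optimal in terms of $|\Cov|$, but the proposition only asserts existence, so this is irrelevant to the statement being proved.
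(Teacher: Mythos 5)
Your proof is correct and uses essentially the same approach as the paper: the backward direction exhibits the identical covering $\Cov=\Auth$, and the forward direction is the same observation, merely phrased directly from Definition~\ref{def:cov} rather than routed through the closure-based reformulation of Lemma~\ref{l:tcov}.
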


\begin{proof}
Let $\Cov$ be a $(\Forb,\Auth)$-covering. By Lemma~\ref{l:tcov}, for every 
$A\in \Forb$ and $B\in\Auth$, $\cl(A)\cap \Cov=\emptyset$
and $\cl(B)\cap \Cov\neq \emptyset$, so $A\nsubseteq B$.
Conversely, if $A\nsubseteq B$ for every $A\in \Forb$ and $B\in\Auth$, then $\Auth$ is an $(\Forb,\Auth)$-covering.
\end{proof}

\begin{lemma}
\label{l:incl2}
Let  $\Forb,\Forb',\Auth,\Auth'\subseteq \Par(P)$. If 
\begin{itemize}
\item for every $A'\in\Forb'$ there exists $A\in\Forb$ with $A\subseteq A'$, and 
\item for every $B'\in\Auth'$ there exists $B\in\Auth$ with $B'\subseteq B$,
\end{itemize}
then any $(\Forb,\Auth)$-covering is also a $(\Forb',\Auth')$-covering.
\end{lemma}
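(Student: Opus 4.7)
The plan is to verify directly that an arbitrary $(\Forb,\Auth)$-covering $\Cov$ satisfies the two conditions in Definition~\ref{def:cov} relative to $(\Forb',\Auth')$. Both verifications reduce to a short transitivity-of-inclusion argument, using the two hypotheses to ``translate'' sets of $\Forb'$ into sets of $\Forb$ and sets of $\Auth'$ into sets of $\Auth$. I expect no real obstacle: the statement is essentially a monotonicity principle saying that shrinking the forbidden sets and enlarging the required sets only makes the covering condition easier to satisfy, and the proof mirrors Lemma~\ref{l:incl} but with set containment playing the role of set membership.

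More concretely, to check condition~1, I would pick any $A'\in\Forb'$ and any $C\in\Cov$, invoke the hypothesis to obtain $A\in\Forb$ with $A\subseteq A'$, and argue by contradiction: if $A'\subseteq C$ then by transitivity $A\subseteq C$, contradicting the fact that $\Cov$ is an $(\Forb,\Auth)$-covering. Hence $A'\nsubseteq C$, as required.

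For condition~2, I would pick any $B'\in\Auth'$, use the second hypothesis to obtain $B\in\Auth$ with $B'\subseteq B$, then apply the defining property of $\Cov$ as an $(\Forb,\Auth)$-covering to produce some $C\in\Cov$ with $B\subseteq C$. Transitivity again yields $B'\subseteq C$, establishing condition~2. Alternatively, I could phrase everything through Lemma~\ref{l:tcov} using the closure operator: the hypotheses imply $\cl(A')\subseteq\cl(A)$ and $\cl(B')\supseteq\cl(B)$, and then the desired emptiness and non-emptiness of intersections with $\Cov$ transfer from $(\Forb,\Auth)$ to $(\Forb',\Auth')$. Either formulation is essentially one line per condition, so there is no genuine difficulty; the only care needed is to apply the two hypotheses in the correct direction (subset for $\Forb'$, superset for $\Auth'$).
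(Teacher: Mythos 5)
Your proof is correct and follows essentially the same route as the paper: translate $A'\in\Forb'$ to $A\in\Forb$ with $A\subseteq A'$ and use transitivity (you phrase it as a contradiction, the paper as a contrapositive, but the content is identical), and dually for $\Auth'$. The alternative phrasing via Lemma~\ref{l:tcov} and closures that you sketch is also fine, but adds nothing over the direct argument.
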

\begin{proof}
Let $\Cov$ be a $(\Forb,\Auth)$-covering. Let $A'\in\Forb'$ and let $A\in\Forb$ with $A\subseteq A'$. For every $C\in\Cov$ we have $A\nsubseteq C$, and so $A'\nsubseteq C$. 
Now let $B'\in\Auth'$ and let $B\in\Auth$ with $B'\subseteq B$. Then there exists a subset $C\in\Cov$ satisfying $B\subseteq C$, which also satisfies $B'\subseteq C$. Hence $\Cov$ is an $(\Forb',\Auth')$-covering.
\end{proof}

\begin{proposition}
\label{p:anti}
Let  $\Forb,\Auth,\Cov\subseteq \Par(P)$. Then $\Cov$ is an $(\Forb,\Auth)$-covering if and only if it is a
$(\min{\Forb},\max{\Auth})$-covering.
\end{proposition}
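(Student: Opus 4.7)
The plan is to prove both implications by reducing each to one of the two inclusion lemmas already established, namely Lemma~\ref{l:incl} and Lemma~\ref{l:incl2}. This is the natural way to proceed, since Proposition~\ref{p:anti} is essentially a statement about monotonicity of the covering condition under the operations of taking minimal and maximal elements.

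For the forward implication, I would start by observing that $\min\Forb\subseteq\Forb$ and $\max\Auth\subseteq\Auth$ simply by definition. Hence, if $\Cov$ is an $(\Forb,\Auth)$-covering, Lemma~\ref{l:incl} applied with $\Forb'=\min\Forb$ and $\Auth'=\max\Auth$ immediately yields that $\Cov$ is a $(\min\Forb,\max\Auth)$-covering. This direction is a direct corollary.

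For the reverse implication, suppose $\Cov$ is a $(\min\Forb,\max\Auth)$-covering. The plan is to apply Lemma~\ref{l:incl2} with the roles $\Forb\leftarrow \min\Forb$, $\Auth\leftarrow\max\Auth$, $\Forb'\leftarrow\Forb$, $\Auth'\leftarrow\Auth$. This requires verifying that for every $A'\in\Forb$ there exists $A\in\min\Forb$ with $A\subseteq A'$, and that for every $B'\in\Auth$ there exists $B\in\max\Auth$ with $B'\subseteq B$. Both facts are standard and follow from the finiteness of $P$ (and hence of the chains in $\Par(P)$ ordered by inclusion): any $A'\in\Forb$ sits above some minimal element of $\Forb$ obtained by iteratively picking strictly smaller elements, and dually for $B'\in\Auth$. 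With these two conditions verified, Lemma~\ref{l:incl2} concludes that $\Cov$ is an $(\Forb,\Auth)$-covering.

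The only subtle point is the argument that minimal (respectively maximal) elements exist below (respectively above) every element of $\Forb$ (respectively $\Auth$). This is immediate in the finite setting implicit throughout the paper, but it is the one place where finiteness is actually used; otherwise the proof is a pure book-keeping application of the monotonicity lemmas.
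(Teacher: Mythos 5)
Your proof is correct and follows exactly the paper's route: Lemma~\ref{l:incl} for the forward implication (using $\min\Forb\subseteq\Forb$, $\max\Auth\subseteq\Auth$) and Lemma~\ref{l:incl2} for the converse, after checking the needed existence of minimal elements below and maximal elements above. You merely spell out the finiteness point that the paper leaves implicit.
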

\begin{proof}
By Lemma~\ref{l:incl}, every $(\Forb,\Auth)$-covering is a $(\min{\Forb},\max{\Auth})$-covering. The converse implication is a direct consequence of Lemma~\ref{l:incl2}.
\end{proof}

According to the previous proposition, we can always restrict the search of $(\Forb,\Auth)$-coverings to the case where $\Forb$ and $\Auth$ are antichains. Further, as a consequence of Lemma~\ref{l:incl2} we can define a partial hierarchy among the pairs of antichains $(\Forb,\Auth)$. For example, every $(\{\{1,2\}\},\{\{3,4,5\}\})$-covering is also a $(\{\{1,2,3\}\},\{\{3,4\}\})$-covering.

To conclude this section, we describe a theoretical lower bound on the size of $(\Forb,\Auth)$-coverings. Note that, in the case $\Auth=\{\{i\}\, :\, i\in P\}$ and $\Forb\subseteq\binom{P}{2}$, the problem of finding an $(\Forb,\Auth)$-covering is equivalent to the graph coloring problem on the graph $G=(P,\Forb)$. In this case, the size of an optimal $(\Forb,\Auth)$-covering is just the chromatic number $\chi(G)$.

Existing general lower bounds on the chromatic number include the clique number, the minimum degree bound, Hoffman's bound, the vector chromatic number, Lov\'asz number and the fractional chromatic number. Our proposed bound generalizes to the case of $(\Forb,\Auth)$-coverings the minimum degree bound $\chi(G)\ge\frac{n}{n-\delta(G)}$, where $n$ is the number of vertices and $\delta(G)$ is the minimum degree of $G$.

\begin{proposition}\label{p:lowerbound}
Let $\Forb,\Auth\subseteq\Par(P)$ be families of subsets satisfying condition (\ref{eq:exist}), and let $\Cov$ be an $(\Forb,\Auth)$-covering. Then $$|\Cov|\ge\frac{|\Auth|}{|\Auth|-\max_{A\in\Forb}\min_{a\in A}\deg_{a}(\Auth)}.$$
\end{proposition}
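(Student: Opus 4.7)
The plan is to use a double-counting argument on the pairs $(B,C)$ with $B\in\Auth$, $C\in\Cov$, and $B\subseteq C$. For each $C\in\Cov$, let $\Auth_C=\{B\in\Auth\,:\,B\subseteq C\}$ denote the collection of sets in $\Auth$ that $C$ covers. By Definition~\ref{def:cov}, every $B\in\Auth$ is contained in at least one $C\in\Cov$, so $\bigcup_{C\in\Cov}\Auth_C=\Auth$, and hence
\[
|\Auth|\;\le\;\sum_{C\in\Cov}|\Auth_C|.
\]
The strategy is then to show that each $|\Auth_C|$ is bounded above by $|\Auth|-\max_{A\in\Forb}\min_{a\in A}\deg_a(\Auth)$; dividing through will yield the claimed inequality.

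To bound $|\Auth_C|$, I would exploit the forbidden constraint. Fix any $C\in\Cov$ and any $A\in\Forb$. Since $\Cov$ is an $(\Forb,\Auth)$-covering we have $A\nsubseteq C$, so there exists some $a_{A,C}\in A\setminus C$. Any $B\in\Auth_C$ satisfies $B\subseteq C$ and therefore cannot contain $a_{A,C}$. This means $\Auth_C$ is disjoint from the $\deg_{a_{A,C}}(\Auth)$ sets of $\Auth$ that do contain $a_{A,C}$, giving
\[
|\Auth_C|\;\le\;|\Auth|-\deg_{a_{A,C}}(\Auth)\;\le\;|\Auth|-\min_{a\in A}\deg_a(\Auth).
\]
Since $A\in\Forb$ was arbitrary, we may minimize the right-hand side over $A$, i.e. maximize $\min_{a\in A}\deg_a(\Auth)$ over $A\in\Forb$, yielding $|\Auth_C|\le |\Auth|-\max_{A\in\Forb}\min_{a\in A}\deg_a(\Auth)$.

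Combining the two displays and rearranging gives the stated lower bound. The only subtle point to verify is that the denominator is strictly positive so that the rearrangement is valid: this is exactly where condition~(\ref{eq:exist}) is used. Indeed, if $A\in\Forb$ achieves the maximum, then for every $B\in\Auth$ we have $A\nsubseteq B$ by~(\ref{eq:exist}), so some $a\in A$ is missing from $B$; in particular no single $a\in A$ lies in all of $\Auth$, so $\min_{a\in A}\deg_a(\Auth)<|\Auth|$. The main (minor) obstacle is simply being careful with the order of quantifiers when passing from the element $a_{A,C}$, which depends on both $A$ and $C$, to a bound that is uniform in $C$; the inequality $\deg_{a_{A,C}}(\Auth)\ge\min_{a\in A}\deg_a(\Auth)$ is what decouples the two dependencies.
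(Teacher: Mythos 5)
Your proof is correct and takes essentially the same route as the paper: both decompose $\Auth$ into the families $\Auth_C=\{B\in\Auth : B\subseteq C\}$, note $|\Auth|\le\sum_{C\in\Cov}|\Auth_C|$, and bound each $|\Auth_C|$ by using $A\nsubseteq C$ to find an $a\in A\setminus C$ that no $B\in\Auth_C$ can contain; the paper merely packages this through an intermediate quantity $\alpha(\Forb,\Auth)$ before arriving at the same $|\Auth_C|\le|\Auth|-\max_{A\in\Forb}\min_{a\in A}\deg_a(\Auth)$. Your explicit check that the denominator is positive is a nice addition the paper omits, though the intermediate assertion that ``no single $a\in A$ lies in all of $\Auth$'' does not follow as stated (the witness $a$ you produce may vary with $B$); what you need, and what does follow by fixing any single $B_0\in\Auth$ and taking $a_0\in A\setminus B_0$, is only that \emph{some} $a\in A$ misses some member of $\Auth$, which already yields $\min_{a\in A}\deg_a(\Auth)\le\deg_{a_0}(\Auth)<|\Auth|$.
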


\begin{proof}
Let $\Cov$ be an $(\Forb,\Auth)$-covering. Given $C\in\Cov$, denote $\Auth_{C}=\Auth\cap\Par(C)$.

By the properties of $(\Forb,\Auth)$-coverings we have that $\Auth\subseteq\cup_{C\in\Cov}\Par(C)$, and this implies that $\Auth=\cup_{C\in\Cov}\Auth_{C}$. Hence $|\Auth|\le\sum_{C\in\Cov}|\Auth_{C}|\le|\Cov|\cdot\max_{C\in\Cov}|\Auth_{C}|$, and so $|\Cov|\ge|\Auth|/\max_{C\in\Cov}|\Auth_{C}|$. We now proceed to upper bound $\max_{C\in\Cov}|\Auth_{C}|$. 

Since for every $B\in\Auth_{C}$ we have $B\subseteq C$, we see that $\cup_{B\in\Auth_{C}}B\subseteq C$. Therefore, by the definition of $(\Forb,\Auth)$-coverings we have that $A\nsubseteq\cup_{B\in\Auth_{C}}B$ for every $A\in\Forb$. Denote by $\alpha(\Forb,\Auth)$ the size of the largest subfamily of $\Auth$ with this property, i.e.
$$\alpha(\Forb,\Auth)=\max\{|\Auth'|\, :\, \Auth'\subseteq\Auth\text{ and }A\nsubseteq\cup_{B\in\Auth'}B\text{ for every }A\in\Forb\}.$$
By the preceding observation, we get that $\max_{C\in\Cov}|\Auth_{C}|\le\alpha(\Forb,\Auth)$. By definition of $\alpha(\Forb,\Auth)$, given any set $A\in\Forb$ we have $\alpha(\Forb,\Auth)\le\alpha(\{A\},\Auth)$, and so $\alpha(\Forb,\Auth)\le\min_{A\in\Forb}\alpha(\{A\},\Auth)$. Now, given a set $A\in\Forb$, a family $\Auth'\subseteq\Auth$ satisfies $A\nsubseteq\cup_{B\in\Auth'}B$ if and only if there exists an element $a\in A$ such that $a\notin\cup_{B\in\Auth'}B$. Therefore $\alpha(\{A\},\Auth)=\max_{a\in A}\alpha(\{a\},\Auth)$. Finally, by definition we see that $\alpha(\{a\},\Auth)=|\Auth|-\deg_{a}(\Auth)$. By composing the obtained results, we see that $\max_{C\in\Cov}|\Auth_{C}|\le\min_{A\in\Forb}\max_{a\in A}(|\Auth|-\deg_{a}(\Auth))=|\Auth|-\max_{A\in\Forb}\min_{a\in A}\deg_{a}(\Auth)$. The proposition follows by applying the first obtained inequality.
\end{proof}

\subsection{Multi-colorings of hypergraphs}\label{sec:coloring_map_def}

In order to construct $(\Forb,\Auth)$-coverings, we will use colorings of hypergraphs. Let $\mathcal{H}=(P,E)$ be a hypergraph.
A \emph{coloring} of $\mathcal{H}$ with $k$ colors is a mapping
$\mu:P\rightarrow \{1,\dots,k\}$ such that
for every $A\in E$ there exists $u,v\in A$ with
$\mu(u)\neq \mu(v)$.

Next we describe the connection between colorings and coverings. 
Let $\mu$ be a coloring of the hypergraph $\mathcal{H}=(P,\Forb)$ with $k$ colors. Consider the family of subsets of elements in $P$ of the same color according to $\mu$. That is, consider a family of subsets $\Cov=\{C_1,\ldots,C_k\}$ that is a partition of $P$ satisfying that $\mu(j)=i$ for every $j\in C_i$ and for every $i$. 

Now consider the pair $(\Forb,\Auth)$ with $\Auth=\{\{i\}\, :\, i\in P\}$.
Observe that $\Cov$ satisfies condition 1 in Definition~\ref{def:cov}, because if a subset $A$ is in $\Forb$, then it cannot be monochromatic. 
Since each element in $P$ has a color, condition 2 is also satisfied. In order to construct $(\Forb,\Auth)$-coverings
for other families of subsets $\Auth\subseteq\Parp$, we can use sequences of colorings. 
In order to define appropriately these constructions, we consider  multi-colorings of the hypergraph. 

For any integer $k>0$, we define a  \emph{multi-coloring} of $\mathcal{H}$ of $k$ colors as a mapping $\mu:P\rightarrow \{0,1\}^k$ with
the following property: for every $A\in E$ and for every $1\leq j\leq k$, there exists $i\in A$ for which the $j$-th coordinate of $\mu(i)$ 
is $0$, namely $\mu(i)_j = 0$. If we associate each $1\leq j \leq k$ with a different color, a multi-coloring of $\mathcal{H}$ is a mapping that maps each element in $P$ to a set of at most $k$ colors. The mapping must satisfy that for every subset in $A\in E$ and for each color, at least one element in $A$ does not have this color. A sequence of colorings of a hypergraph defines a multi-coloring. A multi-coloring defines in a natural way a family of subsets, and vice-versa. Given $\mu$, we define $\Cov=\{C_i\}_{1\leq i\leq k}\subseteq \Parp$, where $C_i$ is the subset of elements of $P$ mapped to the color~$i$.

\begin{lemma}\label{lemma:multi_coloring}
Let  $\Forb,\Auth,\Cov\subseteq \Par(P)$, with $|\Cov|=k$.
Then $\Cov$ is an $(\Forb,\Auth)$-covering if and only if 
$\Cov$ defines a multi-coloring $\mu$ of $\mathcal{H}=(P,\Forb)$ of $k$ colors with the property that for every $B\in\Auth$, there exists $1\leq j\leq k$ 
for which the $j$-coordinate of $\mu(i)$ is $1$ for every $i\in B$.
\end{lemma}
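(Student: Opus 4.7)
The plan is to unpack both sides of the equivalence through the natural bijection between indexed families of subsets of $P$ and maps $P\to\{0,1\}^k$, and then check that conditions~1 and~2 of Definition~\ref{def:cov} correspond bit-for-bit to the two requirements placed on $\mu$ in the statement. There is no real combinatorial content beyond a translation of language, so the main task is just to organize the correspondence cleanly and verify both directions simultaneously.

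First I would fix an enumeration $\Cov=\{C_1,\dots,C_k\}$ and define $\mu:P\to\{0,1\}^k$ by $\mu(i)_j=1$ if and only if $i\in C_j$. Conversely, given any $\mu:P\to\{0,1\}^k$, set $C_j=\{i\in P\,:\,\mu(i)_j=1\}$; these operations are mutually inverse, so a family $\Cov$ with $|\Cov|=k$ and a map $\mu:P\to\{0,1\}^k$ carry the same information (this is the association alluded to in the paragraph preceding the lemma).

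Next I would translate condition~1 of Definition~\ref{def:cov}. For a fixed $A\in\Forb$ and a fixed index $j$, the statement $A\nsubseteq C_j$ says precisely that there exists $i\in A$ with $i\notin C_j$, that is, with $\mu(i)_j=0$. Quantifying over all $A\in\Forb$ and all $j\in\{1,\dots,k\}$, we see that ``$A\nsubseteq C$ for every $A\in\Forb$ and every $C\in\Cov$'' is the same assertion as ``$\mu$ is a multi-coloring of $\mathcal{H}=(P,\Forb)$ of $k$ colors'' in the sense of the definition given just before the lemma.

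Finally I would translate condition~2. For a fixed $B\in\Auth$ and index $j$, the containment $B\subseteq C_j$ is equivalent to $\mu(i)_j=1$ for every $i\in B$. Hence the existence of some $C\in\Cov$ with $B\subseteq C$ is exactly the existence of an index $1\le j\le k$ with $\mu(i)_j=1$ for all $i\in B$. Combining the two translations, $\Cov$ satisfies both conditions of Definition~\ref{def:cov} if and only if $\mu$ is a multi-coloring with the stated property, which is the desired equivalence. The only mild care-point is to keep the quantifiers on $j$ in the right place when matching the two formulations, but no further argument is needed.
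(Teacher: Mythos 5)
Your proposal is correct and follows essentially the same route as the paper: fix an enumeration of $\Cov$, define $\mu(i)_j=1$ iff $i\in C_j$, and match the two conditions of Definition~\ref{def:cov} with the two requirements on $\mu$. The only difference is that the paper writes out the verification of condition~2 and delegates both condition~1 and the converse to ``the comments detailed above,'' whereas you spell out both translations explicitly; your version is the same argument made complete.
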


\begin{proof}
Let $\Cov=\{C_1,\ldots,C_k\}$ be an $(\Forb,\Auth)$-covering . We define a multi-coloring $\mu$ of $k$ colors as follows. For every $i\in P$ and $1\leq j \leq k$, $\mu(i)_j=1$ if and only if $i$ is in $\Cov_j$. Let $B\in\Auth$, and let $C_j$ be a subset in $\Cov$ with $B\subseteq C_j$. Then $\mu(i)_j=1$ for every $i\in B$. 

Taking into account the comments detailed above, it is straightforward to prove that the converse implication also holds.
\end{proof}

We use the connection between coverings and multi-colorings to find general constructions of coverings and upper bounds on their size. Beimel, Farr\`as, and Mintz constructed efficient secret sharing schemes for very dense graphs~\cite{BFM16}. One of the techniques developed in that work is connected to our work. 
In~\cite{FRR}, that result was described in terms of $(\Forb,\Auth)$-coverings for  $\Auth=\binom{P}{k}\s\Forb$. 
Due to Lemma~\ref{l:incl}, if $\Forb\subseteq  \binom{P}{k}$, the biggest family of subsets $\Auth\subseteq  \binom{P}{k}$ admitting a $(\Forb,\Auth)$-covering is $\Auth=\binom{P}{k}\s\Forb$. 
The next lemma states the results described above in a more general way.

\begin{lemma}
\label{l:gen1}
Let $\Forb,\Auth\subseteq\Par(P)$ be families of subsets satisfying condition (\ref{eq:exist}). Let $d$ denote the degree of $\Forb$, and suppose that sets in $\Forb$ and $\Auth$ have size at most $k$.  
Then there exists an $(\Forb,\Auth)$-covering of degree $2(2kd)^{k-1}\ln n$ and size $2(2kd)^{k}\ln n$. 
\end{lemma}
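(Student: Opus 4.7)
The plan is to invoke the probabilistic method through the multi-coloring description of Lemma~\ref{lemma:multi_coloring}. Set $N = 2(2kd)^k \ln n$ and $p = 1/(2kd)$, and construct a random family $\Cov = \{C_1,\ldots,C_N\}$ as follows. First I would sample auxiliary random sets $\tilde{C}_j \subseteq P$, letting each $i \in P$ belong to $\tilde{C}_j$ independently with probability $p$. For each $A \in \Forb$ and each $j \in \{1,\ldots,N\}$, independently draw a uniform representative $R_{A,j} \in A$, and set $C_j = \tilde{C}_j \setminus \{R_{A,j} : A \in \Forb \text{ and } A \subseteq \tilde{C}_j\}$. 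By construction, whenever $A \subseteq \tilde{C}_j$ the element $R_{A,j} \in A$ is removed, so $A \not\subseteq C_j$ deterministically for every $A \in \Forb$ and every $j$, yielding condition~1 of Definition~\ref{def:cov}.

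I would then lower-bound $\Pr[B \subseteq C_j]$ for each $B \in \Auth$. Conditioning on $B \subseteq \tilde{C}_j$ (probability $p^{|B|}$), the event $B \not\subseteq C_j$ can only occur if some $A \in \Forb$ with $A \subseteq \tilde{C}_j$ has $R_{A,j} \in B$. Sets $A \in \Forb$ disjoint from $B$ cannot contribute, and by the degree hypothesis at most $|B| \cdot d \leq kd$ sets $A \in \Forb$ intersect $B$. For each such $A$, condition~(\ref{eq:exist}) forces $|A \setminus B| \geq 1$, so conditionally on $B \subseteq \tilde{C}_j$ one has $\Pr[A \subseteq \tilde{C}_j] = p^{|A \setminus B|} \leq p$; together with the independent bound $\Pr[R_{A,j} \in B] \leq 1$, a union bound over the (at most $kd$) harmful sets yields conditional failure probability at most $kdp = 1/2$. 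Hence $\Pr[B \subseteq C_j] \geq p^{|B|}/2 \geq p^k/2$; by independence over $j$, the probability that $B$ is not covered is at most $\exp(-Np^k/2)$, and a union bound over $B \in \Auth$ (using the crude estimate $|\Auth| \leq n^k$) controls the total covering failure.

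For the degree bound, I would observe that $\deg_i(\Cov) \leq \sum_j \mathbbm{1}[i \in \tilde{C}_j]$, a binomial random variable with mean $Np = 2(2kd)^{k-1}\ln n$, since the removal step can only decrease membership. A Chernoff-type concentration inequality, combined with a union bound over $i \in P$, shows that the maximum degree exceeds its expectation by at most a constant factor with probability $1 - o(1)$. Combining both high-probability events yields the existence of a deterministic realization of $\Cov$ with the claimed size and degree.

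The main subtlety lies in the covering-probability estimate: the interplay between the degree hypothesis (which bounds the number of forbidden sets intersecting a given $B$ by $kd$) and condition~(\ref{eq:exist}) (which forces $|A \setminus B| \geq 1$ and thereby extracts the extra factor $p$) is precisely what makes the choice $p = 1/(2kd)$ work; tightening the constants to match the stated bound exactly may require a slightly more careful Chernoff analysis of the degree term or a modest adjustment of $N$ and $p$.
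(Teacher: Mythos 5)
The paper does not supply a proof of Lemma~\ref{l:gen1}; it is imported from~\cite{BFM16,FRR}, so there is no in-paper argument to compare yours against. Judged on its own, your construction is the natural probabilistic one and the local estimates are correct: deleting a uniformly random representative $R_{A,j}$ from each $A\in\Forb$ that lands entirely inside $\tilde C_j$ kills condition~1 of Definition~\ref{def:cov} with certainty, and the conditional computation giving $\Pr[B\subseteq C_j]\ge p^{|B|}/2\ge p^k/2$ (using $\deg_B(\Forb)\le kd$ from the degree hypothesis, together with the extra factor $p$ extracted from $|A\setminus B|\ge 1$ via condition~(\ref{eq:exist})) is sound.

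The argument does not close, however, at the final union bound. With $N=2(2kd)^k\ln n$ and $p=1/(2kd)$ you correctly get $\Pr[B\text{ uncovered}]\le e^{-Np^k/2}=1/n$ for each fixed $B\in\Auth$, but $|\Auth|$ can be as large as $\Theta(\binom{n}{k})$, so $\ln|\Auth|$ can be as large as $\Theta(k\ln n)$ and the union bound only gives failure probability bounded by $|\Auth|/n$, which is $\ge 1$ for every $k\ge 2$ (and $k=1$ is degenerate: condition~(\ref{eq:exist}) together with $\cup_{B\in\Auth}B=P$ forces $\Forb$ to be empty). To make the first-moment calculation go through you need $N\gtrsim 2(2kd)^k\ln|\Auth|$, i.e.\ roughly a factor $k$ more sets than the statement allows, and the expected degree then becomes $\Theta\bigl(k(2kd)^{k-1}\ln n\bigr)$ rather than $2(2kd)^{k-1}\ln n$. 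The degree bound also needs a further constant factor in any case, since the claimed degree equals the mean $Np$ exactly and a binomial is not below its mean with high probability. This is not mere constant-chasing: with $N$ as stated, the expected number of uncovered sets in $\Auth$ is not $o(1)$, so the probabilistic method produces no witness. To match the stated constants you would either need a dependency-aware tool (such as the Lov\'asz Local Lemma, or the specific machinery of~\cite{BFM16}) that avoids paying for $\ln|\Auth|$, or you should accept size and degree bounds that are larger by a factor of $\Theta(k)$.
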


\subsection{Optimal covers}

Both the optimization problem of determining the size of an optimal $(\Forb,\Auth)$-covering and the search problem of finding an optimal $(\Forb,\Auth)$-covering are NP-hard. This is so because making $\Auth=\{\{i\}\, :\, i\in P\}$ and $\Forb\subseteq\binom{P}{2}$ transforms these problems to the corresponding graph coloring problems, and so there is a trivial reduction from the known NP-hard graph coloring problems to the $(\Forb,\Auth)$-covering problems. Next, we see NP-completeness of the decisional problem.

\begin{proposition}
The problem of deciding whether an $(\Forb,\Auth)$-covering of size $t$ exists is NP-complete.
\end{proposition}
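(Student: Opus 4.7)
The plan is to establish membership in NP and NP-hardness separately, as is standard.

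For membership, I would use the candidate covering $\Cov$ itself as the certificate. Given $\Cov \subseteq \Par(P)$, a verifier checks (i) that $|\Cov|\leq t$, (ii) that for every $A\in\Forb$ and every $C\in\Cov$ the relation $A\nsubseteq C$ holds, and (iii) that for every $B\in\Auth$ some $C\in\Cov$ satisfies $B\subseteq C$. Each subset-containment test takes time polynomial in $|P|$, and the total number of tests is bounded by $|\Cov|\cdot(|\Forb|+|\Auth|)$, which is polynomial in the input size. One should also bound the size of the certificate itself, which is immediate: without loss of generality we may restrict to $|\Cov|\leq t\leq|\Auth|$ (since by Proposition~\ref{p:anti} and Lemma~\ref{l:tcov} one can always take the witnessing sets to be among the maximal sets of $\Par(P)$ containing elements of $\Auth$), so each $C\in\Cov$ is a subset of $P$ and can be written in $O(|P|)$ bits.

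For NP-hardness, I would reduce from the classical graph $t$-colorability problem, which is NP-complete for every fixed $t\geq 3$ and NP-complete in general when $t$ is part of the input. Given an instance $(G,t)$ of graph $t$-colorability with $G=(V,E)$, set $P:=V$, $\Forb:=E$ (viewing each edge as a two-element subset of $P$), and $\Auth:=\{\{v\}\,:\,v\in P\}$. This transformation is clearly polynomial time. It remains to argue equivalence of yes-instances: an $(\Forb,\Auth)$-covering $\Cov$ of size at most $t$ is a family of subsets of $V$ each of which contains no edge of $G$ (hence is an independent set) and whose union is $V$ (by condition~2 applied to singletons); conversely, any proper $t$-coloring of $G$ yields $t$ color classes that are independent sets and partition $V$, and thus form an $(\Forb,\Auth)$-covering of size $t$. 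Thus $G$ is $t$-colorable if and only if an $(\Forb,\Auth)$-covering of size $t$ exists.

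Combining the two directions gives NP-completeness. There is no real obstacle here: the main point to handle with slight care is the size bound on the certificate (so that the problem is honestly in NP rather than just in the polynomial hierarchy), and the observation that coverings by independent sets and proper colorings are interchangeable for the reduction, which is essentially the same remark the text already made before the proposition statement.
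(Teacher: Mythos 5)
Your proposal is correct and takes essentially the same approach as the paper: the covering itself serves as the polynomial-size certificate for membership in NP, and NP-hardness follows by observing that the special case $\Auth=\{\{i\}:i\in P\}$, $\Forb\subseteq\binom{P}{2}$ is exactly graph $t$-colorability. You merely spell out the reduction and its equivalence argument in more detail than the paper does.
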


\begin{proof}
Let $\Forb, \Auth, t$ define an instance of the problem where the answer is affirmative. Given an $(\Forb,\Auth)$-covering $\Cov$ of size $t$, a checking algorithm first verifies that $\Cov$ has size $t$, that every $B\in\Auth$ is contained in some $X\in\Cov$, and that no $A\in\Forb$ is contained in any $X\in\Cov$. The running time of this checking algorithm is at most quadratic in the size of the problem input, and thus the given problem is in NP.

Now, note that the case $\Auth=\{\{i\}\, :\, i\in P\}$ and $\Forb\subseteq\binom{P}{2}$ is equivalent to the graph coloring problem. Therefore the given problem is NP-complete.
\end{proof}



%
%
%

\section{Algebraic formulation of the problem}\label{sec:algebraic_form}

In this section we present an algebraic formulation of the combinatorial problem presented in the previous section. The purpose of this formulation is to exploit algebraic techniques to find solutions to the data splitting problem for a fixed number of fragments. 

It is not unusual that graph-coloring problems are encoded to polynomial ideals \cite{de1995grobner,de2015graph,hillar2008algebraic,levy2009survey}.
In this case, the existence of a coloring is reduced to the solvability of a related system of polynomial equations over the algebraic closure of the field.
Furthermore, the weak Hilbert's Nullstellensatz theorem allows to obtain a 
\emph{certificate} that a system of polynomial has no solutions \cite{cox1992ideals}, and, consequently, that the graph is not colorable.
The focus of this section is the use of polynomial ideals and Gr\"obner basis to provide an optimal multi-coloring $\mu$ with the property
described in Lemma \ref{lemma:multi_coloring}. 
Recall that obtaining a multi-coloring $\mu$ is equivalent to finding an $(\Forb,\Auth)$-covering. 

Let $\mathcal{H} = (P,\Forb)$ be a hypergraph and $\mu$ be a multi-coloring of $\mathcal{H} = (P,\Forb)$ of $k$ colors 
with the property that for every $B\in\Auth$, there exists $1\leq j\leq k$ 
for which the $j$-coordinate of $\mu(i)$ is $1$ for every $i\in B$.
The multi-coloring $\mu$ can be seen as assignment of $\{0,1\}$ values to a set of $kn$ variables $x_{i,j}$, where $n = |P|$, $1 \leq j \leq k$, 
$1 \leq i \leq n$ and $x_{i,j} = 1$ if and only if $\mu(i)_j = 1$. 
In other words, we assign $k$ variables $x_{i,j}$ to each vertex $i$ in $P$ in such a way that, 
\begin{displaymath}
 x_{i,j} = \begin{cases}
 1 & \text{if vertex } i \text{ takes color } j \text{ by applying } \mu\\
 0 & \text{ otherwise.}\\
 \end{cases}
\end{displaymath}
Encoding $\mu$ to a polynomial ring allows an algebraic 
formulation of the multi-coloring problem.  
Since we focus on optimal multi-colorings, the number of colors is fixed to a designated minimal $k$.
Furthermore, each variable $x_{i,j}$ takes values in $\{0,1\}$, which allows working over $\mathbb{F}_2$.

Therefore, given $X = (x_{i,j})_{1 \leq j \leq k,1 \leq i \leq n}$,  
we define the \emph{$k$-coloring ideal} $I_k(\mathcal{H},\Auth) \subset \mathbb{F}_2[X]$ 
to be the ideal generated by:
\begin{itemize}
  \item $G_1 = \{ \prod_{i \in A} x_{i,j} : 1 \leq j \leq k, A \in \Forb \}$\\
- all vertices belonging to an edge set $A \in \Forb$ cannot have the same color;
  \item $G_2 = \{ \prod_{j=1}^{k}(\prod_{i \in B} x_{i,j} -1) : B \in \Auth \}$\\
- there exists a color $j$ such that all the vertices in $B$ are colored with $j$. 
\end{itemize}

Theorem \ref{thm:GB} proves that finding a solution of $I_k(\mathcal{H},\Auth)$ is equivalent to obtaining a multi-coloring $\mu$.

\begin{theorem}\label{thm:GB}
Let $\mu:P\to \{0,1\}^k$ a multi-coloring of $\mathcal{H}=(P,\Forb)$, and assume $\cup_{B\in \Auth} B=P$. 
Then $\mu$ defines an $(\Forb,\Auth)$-covering (in the sense of Lemma \ref{lemma:multi_coloring})  
if and only if $I_k(\mathcal{H},\Auth)$ has a common root in 
$\mathbb{F}_2[X]$. In other words, the multi-coloring $\mu$ of $\mathcal{H}$ does not define an $(\Forb,\Auth)$-covering if and only if $I_k(\mathcal{H},\Auth) = (1)$.
\end{theorem}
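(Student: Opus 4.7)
The plan is to translate the combinatorial statement of Lemma \ref{lemma:multi_coloring} into algebra by establishing an explicit bijection between multi-colorings $\mu: P \to \{0,1\}^k$ of $\mathcal{H}$ and $\mathbb{F}_2$-valued assignments of the variables $x_{i,j}$, namely $X^\mu_{i,j} = \mu(i)_j$. Under this dictionary I would then verify that each of the two conditions characterizing an $(\Forb, \Auth)$-covering is equivalent to the vanishing at $X^\mu$ of one of the two generator families $G_1$ and $G_2$ of $I_k(\mathcal{H}, \Auth)$, so that $\mu$ defines a covering precisely when $X^\mu$ is a common root of $I_k(\mathcal{H}, \Auth)$.

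For the $G_1$ part, I would observe that on a Boolean assignment the multilinear monomial $\prod_{i \in A} x_{i,j}$ vanishes exactly when some $x_{i,j}$ with $i \in A$ is zero, so $\prod_{i \in A} X^\mu_{i,j} = 0$ if and only if some $i \in A$ satisfies $\mu(i)_j = 0$; quantifying over $A \in \Forb$ and $1 \leq j \leq k$ yields the equivalence between $\mu$ being a valid multi-coloring of the hypergraph $(P, \Forb)$ and $X^\mu$ being a common root of the polynomials in $G_1$. For the $G_2$ part, the factor $\prod_{i \in B} x_{i,j} - 1$ vanishes on a Boolean assignment if and only if $\mu(i)_j = 1$ for every $i \in B$, and so the full product $\prod_{j=1}^{k}(\prod_{i \in B} x_{i,j} - 1)$ vanishes if and only if there exists some color $j$ with $\mu(i)_j = 1$ for all $i \in B$, which is exactly the second condition of Lemma \ref{lemma:multi_coloring}. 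Combining the two equivalences proves the first assertion; the hypothesis $\cup_{B \in \Auth} B = P$ is used here to ensure that every vertex is constrained by some $B \in \Auth$ and therefore participates in the algebraic encoding.

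For the second assertion I would invoke the weak Hilbert's Nullstellensatz, which gives that $I_k(\mathcal{H}, \Auth) = (1)$ if and only if the corresponding variety is empty, equivalently (under the bijection above) no multi-coloring of $\mathcal{H}$ yields an $(\Forb, \Auth)$-covering. The main obstacle I expect will be the careful distinction between common roots in $\mathbb{F}_2^{kn}$, which correspond bijectively to Boolean multi-colorings, and roots in the algebraic closure $\overline{\mathbb{F}_2}^{kn}$, which is the natural setting for the Nullstellensatz. To bridge this gap I would either append the Boolean field relations $x_{i,j}^2 + x_{i,j}$ to $I_k$ so that $V(I_k)$ is cut down to its $\mathbb{F}_2$-rational points, or exploit the multilinearity of the generators in $G_1$ and $G_2$ to argue that any common root in the closure can be replaced, coordinate by coordinate, with a Boolean one without affecting common vanishing; either route makes the Nullstellensatz statement directly applicable and completes the proof.
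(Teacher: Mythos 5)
Your first half follows exactly the route the paper takes: identify multi\-/colorings with Boolean assignments via $X^\mu_{i,j}=\mu(i)_j$, and check that on Boolean inputs the monomials of $G_1$ encode property P1 and the products in $G_2$ encode property P2, so that $\mu$ is a covering-defining multi-coloring iff $X^\mu$ is a common root. The paper's displayed proof is essentially this verification and nothing more.

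Where you go beyond the paper is on the ``in other words'' clause, $I_k(\mathcal H,\Auth)=(1)$, and here you have put your finger on a genuine gap that the paper never closes: the weak Nullstellensatz is a statement about roots in $\overline{\mathbb F_2}$, while the dictionary with multi-colorings only speaks to roots in $\{0,1\}^{kn}$. The paper's phrase ``common root in $\mathbb F_2[X]$'' conceals this, and its proof does not touch the Nullstellensatz direction at all. Of the two fixes you propose, note that the first (adjoining the field equations $x_{i,j}^2+x_{i,j}$) changes the ideal and hence the statement being proved; it gives a correct \emph{variant} of the theorem but not the theorem as written. Your second idea is the right one, but it needs the correct rounding: given a common root $(a_{i,j})$ in any extension field $K\supseteq\mathbb F_2$, set $b_{i,j}=1$ if $a_{i,j}\neq 0$ and $b_{i,j}=0$ if $a_{i,j}=0$. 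Then for each generator of $G_1$, the fact that $\prod_{i\in A}a_{i,j}=0$ means some $a_{i,j}=0$, hence the corresponding $b_{i,j}=0$ and $\prod_{i\in A}b_{i,j}=0$. For each generator of $G_2$, the vanishing of $\prod_{j}(\prod_{i\in B}a_{i,j}-1)$ gives some $j$ with $\prod_{i\in B}a_{i,j}=1$; since $K$ has no zero divisors this forces every $a_{i,j}$ with $i\in B$ to be nonzero, hence $b_{i,j}=1$ for all $i\in B$ and the $j$-th factor vanishes at $b$. So any root over $\overline{\mathbb F_2}$ rounds to a Boolean root, and the Nullstellensatz gives $I_k=(1)$ exactly when no Boolean root (equivalently, no covering-defining multi-coloring) exists. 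A naive rounding based on whether $a_{i,j}=1$ does \emph{not} work (e.g.\ $\omega\cdot\omega^2=1$ in $\mathbb F_4$), so you should commit to the nonzero-indicator version rather than leaving the two routes as alternatives.
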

\begin{proof}
$\mu$ is a multi-coloring map if it respects:
\begin{itemize}
 \item[P1)] for every $A\in \Forb$ and for every $1\leq j\leq k$, there exists $i\in A$ for which $\mu(i)_j$ is $0$;\label{G2}
 \item[P2)] for every $B\in\Auth$, there exists $1\leq j\leq k$ for which the $j$-coordinate of $\mu(i)$ is $1$ for every $i\in B$.\label{G3}
\end{itemize}
It is known that if a polynomial $e_1$ encodes a property and $e_2$ encodes another property, than the ideal generated by $e_1$ and $e_2$ 
encodes the conjunction (i.e., \emph{and}) of the properties. 
Therefore, if $G_1$ and $G_2$ encode the properties P1 and P2, respectively, then $I_k(\mathcal{H},\Auth)$ encodes $\mu$.
\begin{itemize}
\item $G_1$: for all $A \in \Forb$ and for every color $j$, we have that $\prod_{i \in A} x_{i,j} \in I_k(\mathcal{H},\Auth)$. This happens if and only if 
$\prod_{i \in A} x_{i,j}$ is $0$ iff there exists $i \in A$ such that  $x_{i,j}=0$, 
which is equivalent to say that there exists $i \in A$ such that  $\mu(i)_j =0$. 
\item $G_2$: for all $B \in \Auth$, we have $\prod_{j=1}^{k}(\prod_{i \in B} x_{i,j} -1) \in I_k(\mathcal{H},\Auth)$.  This happens if and only if
$\prod_{j=1}^{k}(\prod_{i \in B} x_{i,j} -1)$ is $0$ iff there exists $j$ such that $\prod_{i \in B} x_{i,j} = 1$ iff there
exists $j$ such that $x_{i,j} = 1$ for all $i$ which is equivalent to say that there exists $j$ such that $\mu(i)_j = 1$ for all $i \in B$. 
\end{itemize} 
\end{proof}

Observe that imposing $\cup_{B\in \Auth} B=P$ is not restrictive. In fact, it is always possible to add the singletons of 
any vertices to $\Auth$ in order to guarantee that a color is assigned to every vertex, without changing the request of the problem 
(see Example \ref{ex:gb} for more details). In particular, 
if $\cup_{B\in \Auth} B = P$, then for all $i \in P$ there exists $B \in \Auth$ such that $i \in B$ and therefore, there exists
a color $ 1 \leq j \leq k$ such that $\prod_{i \in B} x_{i,j} -1$ which is equivalent to say that there exists $j$ such that $x_{i,j} = 1$, which is equivalent to say that there exists $j$ 
such that $\mu(i)_j = 1$.
The hypothesis $\cup_{B\in \Auth} B = P$ has also been stated in Section \ref{sec:comb_approach}. 
%

Now that the data splitting problem is stated as an algebraic problem, a technique based on Gr\"obner basis can be used to solve it. 
Gr\"obner basis is a generating set of an ideal $I$ in a polynomial ring which allows to determine if any polynomial belongs 
to $I$ or not \cite{cox1992ideals}. In other words, it allows to determine the variety associated to $I$, i.e. the solutions of $I$.
It is proven that it is possible to associate a Gr\"obner basis to any polynomial ideal~\cite{cox1992ideals}.
Informally, Gr\"obner basis computation can be viewed as a generalization of Gaussian elimination for non-linear equations.

In our case, Gr\"obner basis can be used to find the solutions of $I_k(\mathcal{H},\Auth)$. 
Once the Gr\"obner basis of the $k$-coloring
ideal is obtained, the associated variety can be computed easily. 
The complexity of computing the Gr\"obner basis of a system of polynomial equations of degree $d$ in $n$ variables has been proven to be
$d^{O(n^2)}$ when the number of solutions is finite \cite{augere1993efficient}. In general, its complexity is $2^{2^{O(n)}}$.
Since $I_k(\mathcal{H},\Auth)$ belongs to $\mathbb{F}_2[X]$, then it has a finite number 
of solutions, and so the Gr\"obner basis complexity bound is $(kn)^{O(n^2)}$,  which represents the worst-case complexity.
The Gr\"obner basis complexity is at least that of polynomial-system solving.  

As stated before, it is possible to derive a certificate that a system of polynomials has no equation from the weak Hilbert's Nullstellensatz. In our case,
this allows to prove that it is not possible to find a multi-coloring $\mu$ with a designated number of colors $k$. 
\begin{theorem}[weak Hilbert's Nullstellensatz \cite{cox1992ideals}]\label{thm:NulLA}
Suppose that $f_1, \dots, f_m \in \mathbb{K}[x_1,\dots,x_n]$. Then there are no solutions to the system $\{f_i=0\}_{i=1,\dots,m}$ in the algebraic closure 
of $\mathbb{K}$ if and only if there exist $\alpha_1, \ldots, \alpha_m \in \mathbb{K}[x_1,\dots,x_n]$ such that
\begin{equation}\label{eq:NulLA}
\alpha_1 f_1 + \dots + \alpha_m f_m = 1  
\end{equation}
\end{theorem}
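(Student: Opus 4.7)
The plan is to prove the two directions separately, with the forward implication being essentially trivial and the reverse requiring the real structural input. For the forward direction, if $\alpha_1 f_1 + \cdots + \alpha_m f_m = 1$ holds as a polynomial identity over $\mathbb{K}[x_1,\ldots,x_n]$, then evaluating at any hypothetical common zero $(a_1,\ldots,a_n) \in \overline{\mathbb{K}}^n$ of the $f_i$ yields $0 = 1$, a contradiction. One line suffices.

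For the converse I would argue by contrapositive. Suppose $1 \notin I := (f_1,\ldots,f_m)$ in $\mathbb{K}[x_1,\ldots,x_n]$; after extending scalars, the ideal $I' := I \cdot \overline{\mathbb{K}}[x_1,\ldots,x_n]$ remains proper in $R := \overline{\mathbb{K}}[x_1,\ldots,x_n]$ (a relation $1 = \sum \beta_i f_i$ over $\overline{\mathbb{K}}$ could be pushed down to $\mathbb{K}$ by choosing a $\mathbb{K}$-basis of the $\overline{\mathbb{K}}$-span of the $\beta_i$'s and projecting). Thus $I'$ is contained in some maximal ideal $\mathfrak{m}$ of $R$. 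The key claim is then that every maximal ideal of $R$ has the form $(x_1 - a_1, \ldots, x_n - a_n)$ for some $(a_1,\ldots,a_n) \in \overline{\mathbb{K}}^n$; granted this, each $f_j \in \mathfrak{m}$ vanishes at $(a_1,\ldots,a_n)$, producing the required common zero and completing the proof.

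The main obstacle is establishing this structure theorem for maximal ideals, which is the content of \emph{Zariski's lemma}: any field $L$ that is finitely generated as an algebra over a field $k$ is necessarily a finite algebraic extension of $k$. Assuming the lemma, $R/\mathfrak{m}$ is a finite algebraic extension of $\overline{\mathbb{K}}$, hence equals $\overline{\mathbb{K}}$ by algebraic closure; letting $a_i$ denote the image of $x_i$ in this quotient shows $x_i - a_i \in \mathfrak{m}$, and maximality of $(x_1 - a_1, \ldots, x_n - a_n)$ forces equality with $\mathfrak{m}$. For Zariski's lemma itself I would invoke Noether normalization, or alternatively induct on the number of algebra generators, reducing the delicate step to the fact that $k(t)$ is not finitely generated as a $k$-algebra---which follows because any finite collection of rational generators involves only finitely many irreducible denominators, whereas $k[t]$ admits infinitely many irreducibles, yielding an element of $k(t)$ not expressible in terms of the generators.
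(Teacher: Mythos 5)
The paper does not prove this theorem; it is stated as a classical result with a citation to Cox, Little and O'Shea, so there is no proof of record to compare against. Your argument is the standard textbook proof of the weak Nullstellensatz and it is essentially correct. The forward direction (a B\'ezout identity rules out a common zero) is immediate. For the converse, the two places that need care are both handled adequately at the level of a sketch: (i) the descent step---showing that $1\notin (f_1,\dots,f_m)\overline{\mathbb{K}}[x_1,\dots,x_n]$ given $1\notin (f_1,\dots,f_m)\mathbb{K}[x_1,\dots,x_n]$---works exactly as you indicate: take the finite-dimensional $\mathbb{K}$-span $W\subseteq\overline{\mathbb{K}}$ of $1$ and the coefficients of the $\beta_i$, choose a $\mathbb{K}$-basis of $W$ containing $1$, and apply coefficientwise the $\mathbb{K}$-linear projection onto the $1$-component; since the $f_i$ have coefficients in $\mathbb{K}$, this projection is $\mathbb{K}[x]$-linear on the relevant submodule and carries the relation down to $\mathbb{K}[x]$. (One can also simply invoke freeness, hence faithful flatness, of $\overline{\mathbb{K}}$ over $\mathbb{K}$.) (ii) The identification of maximal ideals of $\overline{\mathbb{K}}[x_1,\dots,x_n]$ with points of $\overline{\mathbb{K}}^n$ via Zariski's lemma, plus the observation that a finite algebraic extension of an algebraically closed field is trivial, is correct, and your fallback route to Zariski's lemma (Noether normalization, or the Artin--Tate style induction hinging on $k(t)$ not being a finitely generated $k$-algebra) is a valid one. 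No gaps.
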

The set $\{\alpha_{i}\}$ is called a \emph{Nullstellensatz certificate}. The complexity of computing a certificate depends on the degree of $\{\alpha_i\}$, which is defined as the maximum degree of any $\alpha_i$. Fast results have been achieved for small constant degree of the Nullstellensatz certificate \cite{loera2009expressing}.

According to Theorem \ref{thm:NulLA}, using methods to compute a Nullstellensatz certificate it is possible to find out if $I_k(\mathcal{H},\Auth)$ has solutions or not.
A tentative number of colors $k$ is fixed, and then the problem is solved by applying a Nullstellensatz certificate method.
If we find that there exists no Nullstellensatz certificate, then $I_k(\mathcal{H},\Auth)$ does not have common root. 
The complexity of Nullstellensatz certificate and Gr\"obner basis methods grows with the number of variables which, in our case, 
grows with the number of colors. Therefore, it is convenient to start with few colors and increase them until a certificate of 
feasibility is found or until a Gr\"obner basis is computed.
Consider that finding the optimal $k$ is a NP-complete problem, because it has complexity equivalent to solving the system of equations.


\begin{example}\label{ex:gb}

Given $\Forb = \{\{1, 2,3\}\}$ and 
$\Auth = \{\{1, 4\},\{2, 4\},\{3\}\}$, we want to compute an $(\Forb,\Auth)$-covering.
As explained above, the problem can be encoded to polynomial ideals.
We assign $k$ variables to each attribute, where $k$ is the number of colors needed to obtain an optimal covering. 
For example, we can encode vertex $1$ to $x_{1,1}$ and $x_{1,2}$ when $2$ colors are considered. 
The variable $x_{i,j}$ is equal to $1$ if and only if vertex $i$ takes color $j$, and to $0$ otherwise.   
  
Therefore, $I_2(\mathcal{H},\Auth)$ is the ideal generated by the polynomials in $G_1$ and $G_2$, where
\begin{itemize}
 \item $G_1 = \{x_{1,1}x_{2,1}x_{3,1}, \ x_{1,2}x_{2,2}x_{3,2}\}$.
 \item $G_2 = \{x_{2,1}x_{4,1}x_{2,2}x_{4,2} + x_{2,1}x_{4,2} + x_{2,2}x_{4,1} + 1, \ x_{1,1}x_{4,1}x_{1,2}x_{4,2} + x_{1,1}x_{4,1} + 
 x_{1,2}x_{4,2} + 1, \ x_{3,1}x_{3,2} + x_{3,1} + x_{3,2} + 1\}$.
\end{itemize}
Note that there does not exist an $(\Forb,\Auth)$-covering of size one, because $\{1,2,3\}\subseteq\cup_{B\in\Auth}B$. Since we can compute the Gr\"obner basis of $I_2(\mathcal{H},\Auth)$, there exist $(\Forb,\Auth)$-coverings of size two and we obtain the optimal $(\Forb,\Auth)$-coverings:
\begin{align*}
&\{x_{1,1}, \ x_{2,1}+1, \ x_{3,1}, \ x_{4,1}+1, \ x_{1,2}+1,\ x_{2,2}, \ x_{3,2}+1, \ x_{4,2}+1\},\\
&\{x_{1,1},\ x_{2,1}+1, \ x_{3,1}+1,\ x_{4,1}+1, \ x_{1,2}+1, \ x_{2,2}, \ x_{3,2}, \ x_{4,2}+1\},\\
&\{x_{1,1}, \ x_{3,1}+1, \ x_{1,2}+1, \ x_{2,2}+1, \ x_{3,2}, \ x_{4,2}+1\}.
\end{align*}
In the last solution, the variables $x_{2,1}$ and $x_{4,1}$ are missing, which means that they can take both $0$ and $1$ values.
Therefore, the solutions can be re-written as the following coverings:  
\begin{displaymath}
\begin{array}{llll}
\{\{2,4\},\{1,3,4\}\},&\{\{2,3,4\},\{1,4\}\},&\{\{2,3\},\{1,2,4\}\},&\{\{2,3\},\{1,2,4\}\},\\ 
\{\{3,4\},\{1,2,4\}\},&\{\{2,3,4\},\{1,2,4\}\},&\{\{3\},\{1,2,4\}\}.\\
\end{array}
\end{displaymath}

To obtain the number of colors $k$ which allows to compute an optimal covering, a tentative $k$ is fixed starting by $k=2$. 
If the ideal $(1)$ is obtained as the result of the Gr\"obner basis method applied to $I_2(\mathcal{H},\Auth)$,
then the next $k$ is considered until the solution is different from $(1)$. This $k$ is the smallest one for which we have an $(\Forb,\Auth)$-covering, and thus it is optimal.
\end{example}

\section{A greedy algorithm}\label{section:effalgorithm}

In this section we aim for an efficient method to build $(\Forb,\Auth)$-coverings and for upper bounds on the size of an optimal $(\Forb,\Auth)$-covering. 

As seen above, the problem of finding an optimal $(\Forb,\Auth)$-covering is NP-hard. Hence, as expected, the labour involved in finding an optimal $(\Forb,\Auth)$-covering can render methods inefficient when solving practical data splitting instances. Our strategy to circumvent this consists in sacrificing optimality to achieve a polynomial-time algorithm.

The problem of finding upper bounds on the size of an optimal $(\Forb,\Auth)$-covering $\Cov$ has been studied in the literature for the following particular cases:
\begin{itemize}
\item In the case $\Auth=\{\{i\}\, :\, i\in P\}$ and $\Forb\subseteq\binom{P}{2}$, the problem of finding an $(\Forb,\Auth)$-covering is easily seen to be equivalent to the graph coloring problem. Then $|\Cov|$ is the chromatic number of the graph $G=(\Auth,\Forb)$. For instance, the greedy coloring bound gives $|\Cov|\le \deg(\Forb)+1$. 

\item The case $\Auth\subseteq\binom{P}{2}\cup P$ and $\Forb=\binom{P}{2}\backslash\Auth$ has been studied as the clique covering and the clique partition numbers. Hall \cite{H41} and Erd\H{o}s et al \cite{EGP66} showed that $|\Cov|\le\lfloor |P|^{2}/4\rfloor$. 

\item In the case $\Auth=\binom{P}{l}$ and $\Forb\subseteq\cup_{i>k}\binom{P}{i}$ for $1\le l\le k<n$, the problem of finding a $k$-uniform $(\Forb,\Auth)$-covering is equivalent to finding an $(n,k,l)$-covering design. In this case, Spencer \cite{S87} showed that $|\Cov|\le\binom{n}{l}\left/\binom{k}{l}\left( 1+\ln\binom{k}{l} \right)\right.$.
\end{itemize}

In the following, we first describe a general upper bound on the size of an optimal $(\Forb,\Auth)$-covering. We then deduce from this bound an algorithm to build $(\Forb,\Auth)$-coverings, and analyze its worst-time complexity. Finally, we introduce an heuristic improvement and a theoretical bound that improve the prior results for sparse enough $\Auth$.

\subsection{Our construction}

The next result generalizes the greedy coloring bound to $(\Forb,\Auth)$-coverings. It gives a general bound of the size of an optimal $(\Forb,\Auth)$-covering in terms of the degrees of $\Forb$ and $\Auth$.

\begin{theorem}\label{prop:greedy}
Let $\Forb,\Auth\subseteq\Par(P)$ be families of subsets satisfying condition (\ref{eq:exist}), and suppose that sets in $\Auth$ have size at most $k$.  
Then there exists an $(\Forb,\Auth)$-covering $\Cov$ of size $$|\Cov|\le k\deg(\Forb)\deg(\Auth)+1$$ such that $\deg_{v}(\Cov)\le \deg_{v}(\Auth)$ for every $v\in P$. 
\end{theorem}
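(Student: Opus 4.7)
The plan is to build $\Cov$ by a greedy algorithm that processes the sets $B \in \Auth$ one at a time in arbitrary order. Initialize $\Cov = \emptyset$. For each $B \in \Auth$, say that a set $C \in \Cov$ is \emph{free for} $B$ if no $A \in \Forb$ satisfies $A \subseteq C \cup B$, and \emph{blocked} otherwise. If some $C \in \Cov$ is free for $B$, replace it with $C \cup B$; otherwise, insert $B$ into $\Cov$ as a new set. Note that a fresh insertion is always legal because by (\ref{eq:exist}), $A \nsubseteq B$ for every $A \in \Forb$ and $B \in \Auth$. At the end of the procedure, $\Cov$ is a valid $(\Forb,\Auth)$-covering: condition 2 of Definition~\ref{def:cov} holds because each $B$ is either merged into some chosen $C$ (so $B \subseteq C \cup B$) or inserted directly, and condition 1 is maintained as an invariant by the freeness check.

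For the degree inequality $\deg_v(\Cov) \le \deg_v(\Auth)$, I would observe that any $v \in P$ can become a new element of some set in $\Cov$ only at the moments when some $B \ni v$ is being processed, and each such $B$ is merged into at most one $C$ (or becomes a new $C$). Thus at most one set in $\Cov$ acquires $v$ per such step, so the total number of distinct $C \in \Cov$ containing $v$ is at most $|\{B \in \Auth : v \in B\}| = \deg_v(\Auth)$.

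For the size bound, it suffices to show that once $|\Cov| \ge k \deg(\Forb) \deg(\Auth) + 1$, no further insertion occurs, i.e., for every subsequent $B$ at least one $C \in \Cov$ is free. Suppose $C$ is blocked for $B$, witnessed by some $A \in \Forb$ with $A \subseteq C \cup B$. Since $\Cov$ was a valid partial covering before processing $B$, we have $A \nsubseteq C$, so $A \cap B \neq \emptyset$; the number of such $A$ is thus at most $\deg_B(\Forb) \le \sum_{v \in B} \deg_v(\Forb) \le k \deg(\Forb)$. For each such $A$, condition (\ref{eq:exist}) gives $A \nsubseteq B$, so I can pick some $v \in A \setminus B$; any $C \in \Cov$ that $A$ blocks must contain $v$, and there are at most $\deg_v(\Cov) \le \deg_v(\Auth) \le \deg(\Auth)$ such $C$'s by the already-established degree bound. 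Multiplying gives at most $k \deg(\Forb) \deg(\Auth)$ blocked sets, so if $|\Cov|$ exceeds this quantity, at least one $C$ is free and no new set is created.

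The main obstacle is orchestrating the double count so that both the number of \emph{dangerous} $A \in \Forb$ and the number of $C$'s each such $A$ can spoil are controlled simultaneously. The key insight is that for every relevant $A$ one has both $A \cap B \neq \emptyset$ (from the invariant that $\Cov$ is always a valid partial covering) and $A \setminus B \neq \emptyset$ (from (\ref{eq:exist})); the former bounds the count of $A$'s by $k \deg(\Forb)$, while the latter, combined with the per-vertex degree bound on $\Cov$ proved first, bounds the blocked $C$'s per $A$ by $\deg(\Auth)$. These two complementary facts are what make the greedy argument push through to yield the claimed bound $k \deg(\Forb) \deg(\Auth) + 1$.
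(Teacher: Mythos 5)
Your proposal is correct and follows essentially the same approach as the paper: the paper's proof is the same greedy/inductive construction (Algorithm~\ref{greedyalg}), with the same degree invariant $\deg_v(\Cov)\le\deg_v(\Auth)$ established along the way, and the same double count bounding the number of blocked sets by $k\deg(\Forb)\cdot\deg(\Auth)$ via the two observations that $A\cap B\neq\emptyset$ and $A\setminus B\neq\emptyset$ for any relevant $A\in\Forb$.
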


\begin{proof}
We prove this by induction on $|\Auth|$.
If $|\Auth|=1$, then $\Cov=\Auth$ satisfies the lemma.
Now let $s>1$ be an integer and assume that the proposition holds for every pair $\Forb,\Auth\subseteq\Par(P)$ of families of subsets satisfying $|\Auth|<s$ and the proposition hypotheses. Let $\Forb,\Auth'\subseteq\Par(P)$ be a pair of families of subsets satisfying $|\Auth'|=s$ and the proposition hypotheses, and express $\Auth'=\Auth\cup\{B\}$ for some fixed $B\in\Auth'$ and $|\Auth|=s$. Then, by induction hypothesis, there exists an $(\Forb,\Auth)$-covering $\Cov$ with $|\Cov|\le k\deg(\Forb)\deg(\Auth)+1$ and such that every $v\in P$ is contained in at most $\deg_{v}(\Auth)$ elements of $\Cov$. We now build an $(\Forb,\Auth')$-covering $\Cov'$ from $\Cov$, in such a way that $|\Cov'|\le k\deg(\Forb)\deg(\Auth')+1$ and that every $v\in P$ is contained in at most $\deg_{v}(\Auth')$ elements of $\Cov$.

If $B$ is contained in some $X\in\Cov$, then $\Cov'=\Cov$ satisfies the lemma. Otherwise, let
$$ \mathcal{F}_{B}=\{X\in\Cov : \text{there exists }A\in\Forb\text{ with }A\subseteq X\cup B\}. $$
Note that the condition $A\subseteq X\cup B$ is equivalent to $A\cap B\neq\emptyset$ and $A\setminus B\subseteq X$. Since there are at most $k\deg(\Forb)$ elements $A\in \Forb$ with $A\cap B\neq\emptyset$, and since every set of  the form $A\setminus B$ can be contained in at most $\deg(\Auth)$ elements of $\Cov$ (because $\deg_{v}(\Cov)\le \deg_{v}(\Auth)$ for every $v\in P$ by hypothesis), we have that $|\mathcal{F}_{B}|\le k\deg(\Forb)\deg(\Auth)$.

Therefore, either there exists an element $X\in\Cov\setminus\mathcal{F}_{B}$, in which case we take
$$\Cov'=\{X\cup B\}\cup(\Cov\setminus\{X\})$$
or $\Cov=\mathcal{F}_{B}$, in which case $|\Cov|\le k\deg(\Forb)\deg(\Auth)$ and we let $ \Cov'=\Cov\cup\{B\}. $
\end{proof}

 Algorithm \ref{greedyalg} is a greedy algorithm to compute an $(\Forb,\Auth)$-covering that follows directly from the constructive proof of the previous lemma. This algorithm simply builds a ordered $(\Forb,\Auth)$-covering $\Cov$ by iterating through $\Auth$. Every set $B$ in $\Auth$ is merged with the first available element of $\Cov$, i.e., with the first element $X\in\Cov$ such that no $A\in\Forb$ is contained in $X\cup B$. If no such $X$ exists, then $B$ is added as a singleton in $\Cov$. 
Note that this algorithm is a generalization of the usual greedy coloring algorithm.

 \RestyleAlgo{boxruled}
\begin{algorithm} \LinesNumbered
  \caption{Construction}\label{greedyalg}
  \KwIn{$\Forb=\{A_{1},\ldots,A_{r}\}$, $\Auth=\{B_{1},\ldots,B_{s}\}$}
  Initialize $\Cov\leftarrow\emptyset$\
  \For{$i=1,\ldots,s$}
  {
  	\If{$B_{i}$ is not contained in any $X\in\Cov$}
  	{
  		\eIf{there exists $X\in\Cov$ such that $A\not\subseteq X\cup B_{i}$ for every $A\in\Forb$}
  		{
  			$\Cov\leftarrow\{X\cup B_{i}\}\cup(\Cov\setminus\{X\})$
  		}{
  			$\Cov\leftarrow\Cov\cup\{B_{i}\} $
  		}
  	}
  }
  \KwOut{The $(\Forb,\Auth)$-covering $\Cov$}
\end{algorithm}

To see the worst-time complexity of Algorithm \ref{greedyalg}, note that the first loop (line 2) is repeated $|\Auth|$ times. At step $i$, the first \verb|if| statement (line 3) requires checking at most $i-1$ inclusions, and the second \verb|if| statement (line 4) requires checking at most $(i-1)·|\Forb|$ inclusions. Therefore, Algorithm \ref{greedyalg} runs in time $O(|\Forb|\cdot|\Auth|^{2})$.

\subsection{An heuristic improvement}\label{sec:heur}

In order to motivate the heuristic procedure proposed later, we must first note that the output of Algorithm \ref{greedyalg} depends strongly on the particular order in which elements of $\Auth$ are taken in the first loop. In particular, we see in the following proposition that there always exists an optimal ordering of the elements of $\Auth$. Of course, since the problem of finding an optimal $(\Forb,\Auth)$-covering is NP-hard and an optimal ordering can be verified in polynomial time, finding an optimal ordering in our case is NP-complete.

\begin{proposition}
Let $\Forb,\Auth\subseteq\Par(P)$ be families of subsets satisfying condition (\ref{eq:exist}). Then there exists an ordering of $\Auth$ such that Algorithm \ref{greedyalg} outputs an optimal $(\Forb,\Auth)$-covering.
\end{proposition}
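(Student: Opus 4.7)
The plan is to exploit an optimal $(\Forb,\Auth)$-covering $\Cov^* = \{C_1, \ldots, C_m\}$ (which exists by Proposition \ref{p:ecov} since condition (\ref{eq:exist}) holds) in order to define a good ordering of $\Auth$. Since $\Cov^*$ is a covering, for every $B \in \Auth$ there exists some $j(B) \in \{1, \ldots, m\}$ with $B \subseteq C_{j(B)}$; I would fix one such choice arbitrarily for each $B$ and order $\Auth$ as $(B_1, \ldots, B_s)$ so that $j(B_1) \le \ldots \le j(B_s)$. This groups the elements of $\Auth$ according to which $C_j$ covers them.

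The main step is to show that, for each value $j_0 \in \{1, \ldots, m\}$, the processing of the group $\{B_i : j(B_i) = j_0\}$ adds at most one new set to $\Cov$ via line 7 of Algorithm \ref{greedyalg}. Suppose $B_{i_k}$ is the first element of group $j_0$ that triggers this branch, creating $X_{\mathrm{new}} = B_{i_k}$. For any subsequent element $B_{i_{k'}}$ of group $j_0$, I would argue as follows: in the ordering, all positions strictly between $B_{i_k}$ and $B_{i_{k'}}$ still belong to group $j_0$, so any merge performed into $X_{\mathrm{new}}$ during this segment absorbs only subsets of $C_{j_0}$. Hence $X_{\mathrm{new}} \subseteq C_{j_0}$ at the moment $B_{i_{k'}}$ is processed, which gives $X_{\mathrm{new}} \cup B_{i_{k'}} \subseteq C_{j_0}$. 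Since $A \nsubseteq C_{j_0}$ for every $A \in \Forb$ by the covering property of $\Cov^*$, we conclude $A \nsubseteq X_{\mathrm{new}} \cup B_{i_{k'}}$, so $X_{\mathrm{new}}$ is a valid witness for the existential in line 4 of Algorithm \ref{greedyalg}. The algorithm therefore performs a merge (with $X_{\mathrm{new}}$ or any other valid $X$) rather than adding a new singleton.

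Since the algorithm creates at most one new set per group and there are at most $m$ groups, the output covering $\Cov$ satisfies $|\Cov| \le m$. As $\Cov$ is itself a valid $(\Forb,\Auth)$-covering and $\Cov^*$ is optimal, we must have $|\Cov| = m$, establishing optimality. The main obstacle I anticipate is the bookkeeping of $X_{\mathrm{new}}$ through the algorithm's sequence of merges: because the algorithm's choice of $X$ in line 4 is left unspecified, it might merge subsequent group-$j_0$ elements with sets unrelated to $X_{\mathrm{new}}$, altering the structure of $\Cov$. The argument is robust to this since it only requires that \emph{some} valid merge is possible, which the always-available $X_{\mathrm{new}}$ provides.
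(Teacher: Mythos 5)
Your proof is correct and follows essentially the same approach as the paper: take an optimal covering, assign each $B\in\Auth$ to a set of that covering containing it, and order $\Auth$ by grouping according to that assignment. The only differences are cosmetic — the paper fixes the assignment by taking the minimal-index containing set (so the groups form a canonical partition), whereas you allow an arbitrary choice, and you spell out the invariant (the set created during the processing of group $j_0$ stays inside $C_{j_0}$, hence remains a valid merge witness) that the paper leaves as ``easy to see.''
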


\begin{proof}
Let $\Cov=\{X_{1},\ldots,X_{t}\}$ be an optimal $(\Forb,\Auth)$-covering. For every $j\in\{1,\ldots,t\}$, define $S_{j}$ to be the family of elements of $\Auth$ that are contained in $X_{j}$ and that are not contained in any $X_{l}$ for $l<j$,
$$ S_{j}=\{B\in\Auth : B\subseteq X_{j}\text{ and } B\not\subseteq X_{l}\text{ for every } l<j\}.$$
We first prove that $\{S_{j}\}_{j=1}^{t}$ defines a partition of $\Auth$. 

Indeed, $S_{j}\neq\emptyset$, because otherwise $\Cov\setminus\{X_{j}\}$ would be an $(\Forb,\Auth)$-covering smaller than $\Cov$. 

Also, $S_{i}\cap S_{j}=\emptyset$ for every $i,j$. Otherwise, if $B\in S_{i}\cap S_{j}$ with $i<j$, then $B\in S_{i}$ implies $B\subseteq X_{i}$, and $B\in S_{j}$ implies $B\not\subseteq X_{i}$, a contradiction. 

Finally, since every $B\in\Auth$ is contained in some element of $\Cov$ by the definition of $(\Forb,\Auth)$-covering, we can take $X_{j}\in\Cov$ with minimal index $j$ among those that contain $B$. Then $B\in S_{j}$ by definition, and therefore $\Auth=\cup_{j=1}^{t} S_{i}$.

Now, define a new ordering of $\Auth$ by taking the sets in $S_{j}$ sequentially. That is, if $S_{j}=\{B_{j,1},\ldots,B_{j,k_{j}}\}$, define
$$ \Auth'=\{B_{1,1},\ldots,B_{1,k_{1}},\ldots,B_{t,1},\ldots,B_{t,k_{t}}\}. $$

Consider the behavior of Algorithm~\ref{greedyalg} on input $\Forb,\Auth'$. It is easy to see that, when the algorithm finishes processing the sets in $S_{j}$, the local variable $\Cov$ holds at most $j$ elements. Therefore, since the covering $\Cov$ is optimal, Algorithm \ref{greedyalg} outputs an optimal $(\Forb,\Auth)$-covering.
\end{proof}

Following this result, we propose an heuristic procedure to build an ordering of $\Auth$, inspired in the Welsh-Powell algorithm \cite{WP}. This procedure can be deduced from the proof of the following proposition, which effectively reduces the upper bound given in Theorem~\ref{prop:greedy} for sparse enough $\Auth$. 

\begin{proposition}\label{prop:heuristic}
Assume the hypotheses of Theorem~\ref{prop:greedy}. Then there exists an $(\Forb,\Auth)$-covering $\Cov$ of size $$|\Cov|\le \max_{i}\min\{\deg_{B_{i}}(\Forb)\deg(\Auth)+1,i\}$$ such that $\deg_{v}(\Cov)\le \deg_{v}(\Auth)$ for every $v\in P$. 
\end{proposition}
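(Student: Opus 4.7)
The plan is to run Algorithm~\ref{greedyalg} on $\Forb$ and an ordering $\Auth=\{B_{1},\ldots,B_{s}\}$, and to sharpen the bookkeeping in the proof of Theorem~\ref{prop:greedy}. Let $\Cov_{i}$ denote the family $\Cov$ at the end of iteration~$i$, with $\Cov_{0}=\emptyset$. I would prove by induction on $i$ that, simultaneously,
\begin{enumerate}
\item[(a)] $\deg_{v}(\Cov_{i})\le |\{j\le i : v\in B_{j}\}|$ for every $v\in P$, and
\item[(b)] $|\Cov_{i}|\le \max_{j\le i}\min\{\deg_{B_{j}}(\Forb)\deg(\Auth)+1,\ j\}$.
\end{enumerate}
Invariant (a) evaluated at $i=s$ yields the claimed $\deg_{v}(\Cov)\le \deg_{v}(\Auth)$, and (b) at $i=s$ gives the size bound.

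Invariant (a) is the easy step. At iteration $i$, Algorithm~\ref{greedyalg} either leaves $\Cov$ untouched, swaps some $X\in\Cov_{i-1}$ for $X\cup B_{i}$, or adjoins $B_{i}$ as a new set. In every case a new membership can only be created for a vertex $v\in B_{i}$, and only in one set of $\Cov$; hence $\deg_{v}(\Cov)$ grows by at most one, and only when $v\in B_{i}$, which is exactly when the right-hand side of (a) also grows by one.

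The work is in invariant (b). If Algorithm~\ref{greedyalg} does not create a new fragment at step $i$, then $|\Cov_{i}|=|\Cov_{i-1}|$ and the bound is inherited from $i-1$. Otherwise $B_{i}$ is added as a new set, which happens precisely when every $X\in\Cov_{i-1}$ lies in $\mathcal{F}_{B_{i}}=\{X\in\Cov_{i-1}:\exists A\in\Forb,\ A\subseteq X\cup B_{i}\}$. The refinement over Theorem~\ref{prop:greedy} is that any such $A$ must satisfy $A\cap B_{i}\neq\emptyset$: the alternative $A\subseteq B_{i}$ is ruled out by condition~(\ref{eq:exist}), and $A\subseteq X$ alone would contradict the invariant of the algorithm that no forbidden set is contained in any $X\in\Cov_{i-1}$. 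Hence there are at most $\deg_{B_{i}}(\Forb)$ choices for $A$, and for each, picking any $v\in A\setminus B_{i}$ (non-empty again by~(\ref{eq:exist})), the requirement $A\setminus B_{i}\subseteq X$ forces $v\in X$; by~(a) the number of such $X$ is at most $\deg_{v}(\Cov_{i-1})\le \deg(\Auth)$. This gives $|\mathcal{F}_{B_{i}}|\le \deg_{B_{i}}(\Forb)\deg(\Auth)$, so after adjoining $B_{i}$ we get $|\Cov_{i}|\le \deg_{B_{i}}(\Forb)\deg(\Auth)+1$; combined with the trivial $|\Cov_{i}|\le i$, this establishes~(b) at step $i$.

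The only substantive subtlety is that (a) and (b) must be carried jointly: it is precisely (a) that upgrades the worst-case count $k\deg(\Forb)\deg(\Auth)$ used in Theorem~\ref{prop:greedy} to the sharper, $B_{i}$-local bound $\deg_{B_{i}}(\Forb)\deg(\Auth)$. I would close by noting that the estimate in~(b) is minimized by enumerating $\Auth$ so that $\deg_{B_{i}}(\Forb)$ is non-increasing in $i$, in the spirit of the Welsh--Powell ordering~\cite{WP}; this is the heuristic improvement, and the gain over Theorem~\ref{prop:greedy} is strict as soon as many sets in $\Auth$ have small $\Forb$-degree.
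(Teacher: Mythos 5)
Your proof is correct and mirrors the paper's own argument: both analyze Algorithm~\ref{greedyalg} step by step, replacing the bound $k\deg(\Forb)\deg(\Auth)$ on $|\mathcal{F}_{B_i}|$ from Theorem~\ref{prop:greedy} with the local bound $\deg_{B_i}(\Forb)\deg(\Auth)$ and capping by the step count $i$; your invariant pair (a) and (b) simply makes explicit the induction the paper performs inline, and the closing Welsh--Powell reordering matches the paper's opening line. The one misstep is in your concluding remark: it is not invariant (a) that sharpens $k\deg(\Forb)$ to $\deg_{B_{i}}(\Forb)$---that is just the definition of $\deg_{B_{i}}(\Forb)$ applied directly in place of the coarse estimate $|B_i|\cdot\deg(\Forb)\le k\deg(\Forb)$---while (a) contributes the $\deg(\Auth)$ factor, exactly as it already did in the proof of Theorem~\ref{prop:greedy}.
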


\begin{proof}
First reorder $\Auth$ so that $\Auth=\{B_{1},\ldots,B_{s}\}$ satisfies
$$ \deg_{B_{1}}(\Forb)\ge \deg_{B_{2}}(\Forb) \ge \cdots \ge \deg_{B_{s}}(\Forb). $$

Now, consider the behavior of Algorithm~\ref{greedyalg} on input $\Forb$ and the reordered $\Auth$. At step $i$, algorithm \ref{greedyalg} processes $B_{i}$. In this step, there can be at most $\deg_{B_{i}}(\Forb)\deg(\Auth)$ sets $X\in\Cov$ such that $B_{i}$ does not satisfy the condition in line $4$ (that is, such that there exists $A\in\Forb$ with $A\subseteq X\cup B_{i}$). To see this, note that by definition at most $\deg_{B_{i}}(\Forb)$ elements $A\in\Forb$ intersect $B_{i}$, and that each set of the form $A\backslash B_{i}$ can be contained in at most $\min\{\deg(\Auth),|\Cov|\}\le \deg(\Auth)$ elements of $\Cov$. 

Now, at step $i$ the number of elements $X\in\Cov$ checked in the condition of line $4$ is at most $\min\{\deg_{B_{i}}(\Forb)\deg(\Auth),|\Cov|\}$. 
Since at step $i$ the family $\Cov$ has at most $i-1$ sets, at most $\min\{\deg_{B_{i}}(\Forb)\deg(\Auth),i-1\}$ elements of $\Cov$ are checked until either line $5$ or $7$ is executed, and line $7$ can add an additional element to $\Cov$. Hence, by iterating through all elements of $\Auth$, the size of the final output can be at most $\max_{i}\min\{\deg_{B_{i}}(\Forb)\deg(\Auth),i-1\}+1$.
\end{proof}

We now state our heuristic improvement of Algorithm~\ref{greedyalg}, which follows directly from the previous proof.

 \RestyleAlgo{boxruled}
\begin{algorithm} \LinesNumbered
  \caption{Heuristic Improvement of Algorithm~\ref{greedyalg}}\label{heuralg}
  \KwIn{$\Forb=\{A_{1},\ldots,A_{r}\}$, $\Auth=\{B_{1},\ldots,B_{s}\}$}
  \For{$B\in\Auth$}
  {
    Compute $ \deg_{B}(\Forb)=|\{A\in\Forb: A\cap B\neq\emptyset\}| $
  }
  Sort $\Auth$ so that $\Auth=\{B'_{1},\ldots,B'_{s}\}$ satisfies
  $ \deg_{B'_{1}}(\Forb)\ge \deg_{B'_{2}}(\Forb) \ge \cdots \ge \deg_{B'_{s}}(\Forb)$

  \KwOut{The output of Algorithm~\ref{greedyalg} on input $\Forb,\Auth$}
\end{algorithm}

To see the worst-time complexity of Algorithm \ref{heuralg}, note that the computation of each quantity $\deg_{B}(\Forb)$ requires $O(|\Forb|)$ time. Adding in the sorting time, we see that our heuristic takes $O(|\Auth|\cdot(|\Forb|+\log(|\Auth|)))$ time. In turn, adding this to the cost of Algorithm~\ref{greedyalg} does not alter the total $O(|\Forb|\cdot|\Auth|^{2})$ worst-time complexity.

\begin{remark}
In fact, the previous proof indicates a slightly better bound. For an integer $\alpha$ define the function $f_{\alpha}$ by $$f_{\alpha}(\beta)=\left\{\begin{array}{ll} \beta & \text{if }\alpha<\beta \\ \beta+1 & \text{otherwise.} \end{array}\right.$$ Then the bound on the previous proposition can be replaced by $$|\Cov|\le f_{\deg_{B_{s}}(\Forb)\deg(\Auth)}(f_{\deg_{B_{s-1}}(\Forb)\deg(\Auth)}(\cdots (f_{\deg_{B_{2}}(\Forb)\deg(\Auth)}(1))\cdots))+1.$$
\end{remark}

\section{Experimental results}\label{sec:exp_results}

This section details the experimental results obtained by implementing the
proposed methods in the Sage Mathematical Software System \cite{sage}, version 7.4.  First, a comparison between Algorithm \ref{heuralg} and the Gr\"obner 
basis method on a practical setting is shown. 
Then, a performance analysis of Algorithms \ref{greedyalg} and \ref{heuralg} 
is carried out over random graphs.
The reported experiments have been conducted on an AMD Ryzen 7 1700X Eight-core 3.4 GHz processor, with 32 GB of RAM, in Sage~\cite{sage} and under Ubuntu 4.10.0-37. All experiments have been carried out without parallelization. All CPU running times have been collected using the function \verb|cputime(subprocesses=True)| in Sage. 

%

\subsection{Medical data example}\label{sec:medical_comparison}

Medical data tend to be extremely storage-demanding, and thus it is often unfeasible for the data holder to store it in local.
Therefore, medical data provides a good candidate to apply privacy-preserving data splitting.

Table~\ref{tab:hospital} depicts several possible features that can be found in medical data.
Since the features \emph{patient ID} and \emph{address} completely identify the patient, 
they need to be stored in encrypted form, and therefore they are not taken into account in the associated data splitting problem.  
A different numerical identifier is assigned to every other feature for presentation purposes.

\begin{table}[ht]
\centering
 \begin{tabular}{|c|c|c|c|c|}
\cline{1-2}\cline{4-5}
 $\#$ & Hospital Folder features& \hspace{0.5cm} & $\#$ & Hospital Folder features \\
\cline{1-2}\cline{4-5}
\cline{1-2}\cline{4-5}
- & patient ID & & 4 & weight \\
- & address & & 5 & diagnosis\\
0 & ZIP code & & 6 & procedure\\
1 & birth date & & 7 & medication\\
2 & gender & & 8 & charges \\
3 & ethnicity & & 9 & hospital ID \\
\cline{1-2}\cline{4-5}
 \end{tabular}
 \caption{Example of patient Healthcare features.}\label{tab:hospital}
\end{table}

Observe that other combinations of attributes can also be sensitive. An example of such combination can be 
$\{0, 2, 3\}$ as it is shown in \cite{sweeney2000simple}, 
where a 1990 federal census reports that in Dekalb, Illinois there were only two black women who resided in that town. 
We can also consider 
$\{0,1,3\}$, $\{0,1,4\}$ and $\{1,2,3\}$ sensitive for obvious reasons. 
Moreover, some attributes need to be stored in the same fragment, for instance to perform statistical analysis computations. 
Possible combinations are: $\{1,2,5\}$, $\{1,3,5\}$ and $\{0,2,5\}$.

The Gr\"obner basis method (implemented as \verb|buchberger2()| in Sage) and Algorithm \ref{heuralg} can be used to find an $(\Forb,\Auth)$-covering 
that solves the data splitting problem, 
where $\Forb = \{\{0,2,3\},\{0, 1, 2\},\{0,1,4\},\{1,2,3\}\}$ and $\Auth = \{\{1,2,5\},\{1,3,5\},\{0,2,5\},\{4\}\}$.

\begin{table}[ht]
\centering \small
 \begin{tabular}{|c|c|c||c|c|c||c|c|c|}
\hline
\multicolumn{3}{|c||}{} & \multicolumn{3}{|c||}{Gr\"obner basis method} & \multicolumn{3}{|c|}{Algorithm \ref{heuralg}}\\
\hline
\# v. & $\Forb$ & $\Auth$ & \# sol. & $|\Cov|$ & time &  opt. & $|\Cov|$ & time \\
\hline
\hline
\multirow{2}{*}{6} & $ \{023,012,014,$ & \multirow{2}{*}{$ \{125,135,025,4\}$} 
& \multirow{2}{*}{15} & \multirow{2}{*}{3} & \multirow{2}{*}{6.68s} & \multirow{2}{*}{Yes} 
& \multirow{2}{*}{3} & \multirow{2}{*}{3.84ms} \\
& $123\}$ & & & & & &  &  \\
\hline

6 & $\{023,012,014\}$ & $\{125,135,025,4\}$ & 3 & 2 & 1.12ms & Yes & 2 & 1.19ms \\
\hline
 8 & $\{ 045,123,89\}$ & $\{124,458,09,238\}$ & 1 & 2 & 316ms & No & 3 & 1.41ms \\

\hline
 \multirow{2}{*}{9} & $\{13,168,34,$ & $\{023, 012, 36, $ &
 \multirow{2}{*}{204} & \multirow{2}{*}{3} & \multirow{2}{*}{16h 45min} & \multirow{2}{*}{Yes} & \multirow{2}{*}{3} & 
 \multirow{2}{*}{4.88ms} \\
  & $79,036\}$ & $46,78, 07, 9\}$ & & & & & & \\
\hline
 \multirow{2}{*}{10} & $\{02,168,34,$ & $\{01,128,35,46,$ &
 \multirow{2}{*}{2} & \multirow{2}{*}{2} & \multirow{2}{*}{1.87s} & \multirow{2}{*}{Yes}
 & \multirow{2}{*}{2} & \multirow{2}{*}{2.19ms} \\
  & $79,03\}$ & $78,04,23, 9\}$ & & & & & & \\

 \hline
\end{tabular}
\vspace{5pt}

\begin{minipage}{0.8\linewidth}
\raggedright{ ``\# v.'': number of vertices of the selected hypergraph,\\ 
  ``\# sol.'': number of optimal coverings, \\
  $|\Cov|$: size of the $(\Forb,\Auth)$-coverings computed by the respective method,\\ 
 ``time'': the time needed by the related method to compute the solution, \\
 ``opt.'': whether or not the solution of Algorithm \ref{heuralg} is optimal,\\  
 We use compact notation for sets, i.e. $023 = \{0,2,3\}$.}
\end{minipage}
 \caption{Comparison between Gr\"obner basis method and Algorithm \ref{heuralg} on several hypergraphs.}\label{tab:comparisonGB_Jordi}
\end{table}

The first column of Table \ref{tab:comparisonGB_Jordi} shows the results of applying the Gr\"obner basis method and Algorithm \ref{heuralg}
to the medical data set.  
Algorithm \ref{heuralg} has been chosen for the tests above instead of Algorithm \ref{greedyalg} due to efficiency reasons. 
Both the Gr\"obner basis method and Algorithm \ref{heuralg} provide optimal solutions in the considered case 
but with a considerable time difference.
Two of the optimal solutions computed by the Gr\"obner basis method are depicted in Figure \ref{fig:sol_medical}.

Table \ref{tab:comparisonGB_Jordi} also depicts the results of the Gr\"obner basis method and Algorithm \ref{heuralg} to several 
other splitting problems, all referred to the medical data set of Table \ref{tab:hospital}.
The time needed %
to obtain a solution is strictly related to the 
degree of $\Forb$ and $\Auth$. Other parameters which affect the running time are the number of vertices and the size of the optimal covering 
(see Section \ref{sec:algebraic_form} for more details). However, while having the same number of vertices, the needed time may vary greatly. Observe that Algorithm \ref{heuralg} does not always compute an optimal solution.

\begin{figure}[ht]
	\begin{centering}
		\begin{tikzpicture}[scale=0.4,thick,framed]
        \node at (-0.3,-0.8) {\LARGE 0};
        \node at (4,-0.3) {\LARGE 1};
        \node at (1.8,-2.2) {\LARGE 5};
        \node at (0.1,-3.5) {\LARGE  2};
        \node at (5.5,-2.5) {\LARGE  3};
        \node at (2.5,-4.7) {\LARGE 4};
        \draw[rotate=40]  (-1.1,-2.7) ellipse (2 and 3.5);
        \draw[rotate=70]  (-0.1,-3) ellipse (2.3 and 4.2);
        \draw[rotate=130]  (-2.8,-0.3) ellipse (1 and 4);
        \draw[draw=none] (-2.7,2.3) rectangle (7.7,-6.4);
        \end{tikzpicture}
		~~~
		\begin{tikzpicture}[scale=0.4,thick,framed]
        \node at (-0.5,-0.6) {\LARGE 0};
        \node at (4,-0.1) {\LARGE 1};
        \node at (2,-1.8) {\LARGE 5};
        \node at (0.2,-3.4) {\LARGE  2};
        \node at (5.7,-2.5) {\LARGE  3};
        \node at (2.5,-4.4) {\LARGE 4};
        \draw[rotate=0]  (0.4,-1.9) ellipse (2.5 and 2.5);
        \draw[rotate=60]  (0.6,-3.8) ellipse (2.2 and 3.2);
        \draw[rotate=145]  (-3.25,0.6) ellipse (2.3 and 4.2);
        \draw[draw=none] (-2.9,2) rectangle (7.5,-6.7);
        \end{tikzpicture}
		\par
	\end{centering}
	\caption{Two of the optimal solutions computed by Gr\"obner basis method of the medical data problem. 
	The solution found by Algorithm
	\ref{heuralg} is depicted in the right chart.
	Vertices in the same set belong to the same fragment.}\label{fig:sol_medical}
\end{figure}
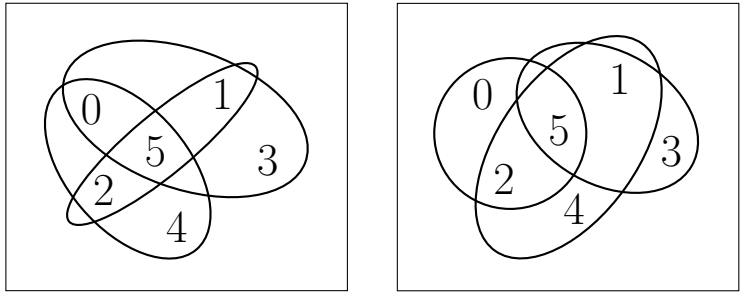

\subsection{Performance analysis over random graphs}\label{sec:random_graph_comparison}

We now give some performance measures to analyze the results presented in Section~\ref{section:effalgorithm}. In this performance analysis we restrict to the graph case, and thus we take $\Forb\subseteq\binom{P}{2}$ and $\Auth\subseteq\binom{P}{2}\cup P$ to be disjoint subsets. We classify the test cases according to two parameters: the number $n$ of vertices and the sum of densities $\rho=\rho_\Forb + \rho_\Auth$ of $\Forb$ and $\Auth$. For each test case, we randomly generate graphs by choosing every single edge of the complete graph $K_{n}$ with probability $\rho$, and we then throw a uniform random coin for each chosen edge to determine if it belongs either to $\Forb$ or to $\Auth$. Next, we add the necessary singletons to $\Auth$ so that $\cup_{B\in\Auth}B=P$. Finally, we randomly shuffle $\Forb$ and $\Auth$ and we apply the algorithm to test. Hence, in the considered experiments both $\Forb$ and $\Auth$ are generated with density $\rho_{\Forb}=\rho_{\Auth}=\rho/2$.

In Table~\ref{table:time} we analyze the time performance of Algorithm~\ref{greedyalg} and Algorithm~\ref{heuralg}. For each of the considered $n$ and $\rho$, the reported CPU running times have been averaged over $10^3$ independent random experiments.

\begin{table}[ht]\centering\begin{tabular}{|c|c|c|c|c||c|c|c|c|}
\hline
\multirow{1}{*}{} &
      \multicolumn{4}{c||}{Algorithm~\ref{greedyalg}} &
      \multicolumn{4}{c|}{Algorithm~\ref{heuralg}} \\\hline
 $n$ & $\rho=0.1$ & $\rho=0.4$ & $\rho=0.7$ & $\rho=1.0$ & $\rho=0.1$ & $\rho=0.4$ & $\rho=0.7$ & $\rho=1.0$ \\\hline
10 & 0.003 & 0.010 & 0.029 & 0.066 & 0.005 & 0.020 & 0.057 & 0.125\\\hline
20 & 0.018 & 0.181 & 0.740 & 1.989 & 0.036 & 0.347 & 1.255 & 3.021\\\hline
30 & 0.065 & 1.150 & 5.134 & 14.80 & 0.136 & 2.023 & 7.769 & 19.88\\\hline
40 & 0.186 & 4.415 & 21.10 & 61.85 & 0.384 & 7.116 & 29.03 & 78.39\\\hline
50 & 0.442 & 12.69 & 63.28 & 190.6 & 0.891 & 19.29 & 81.76 & 227.8\\\hline
60 & 0.941 & 30.45 & 155.8 & 478.7 & 1.864 & 43.95 & 193.2 & 552.9\\\hline
70 & 1.803 & 63.69 & 338.6 & 1054 & 3.452 & 88.57 & 403.8 & 1177\\\hline
\end{tabular}
\vspace{5pt}
\caption{Time performance analysis (in seconds).}\label{table:time} 
\end{table}

Observe that average running times increase both in the number of attributes and the density, and range between milliseconds and 20 minutes for the considered parameters.

Next, in Table~\ref{table:sizeheur} we compile evidence on the size of the result output by Algorithm~\ref{heuralg} over the size of the result output by Algorithm~\ref{greedyalg}. For every considered $n$ and $\rho$, we randomly instantiate $10^5$ different $\Forb$ and $\Auth$ as stated above, and for each of them we compute the sizes $s_{\mathrm{alg}}$ and $s_{\mathrm{heur}}$ of the $(\Forb,\Auth)$-coverings given by Algorithm~\ref{greedyalg} and by Algorithm~\ref{heuralg}, respectively. Then, we compute the decrease as the percentage $(100(s_{\mathrm{alg}}-s_{\mathrm{heur}})/s_{\mathrm{alg}})\%$ in size offered by the heuristic. The reported percentual decreases have been averaged over at least $10^5$ independent random experiments.

\begin{table}[ht]\centering\begin{tabular}{|c|c|c|c|c|c|c|c|c|c|c|}
\hline
 $n$ & $\rho=0.1$ & $\rho=0.3$  & $\rho=0.5$ & $\rho=0.7$ & $\rho=0.9$ & $\rho=1.0$ \\\hline
5 & 0.01233 & 0.2569 & 0.5101 & 0.3551 & 0.09567 & 0\\\hline
6 & 0.07317 & 1.006 & 1.234 & 0.7517 & 0.3193 & 0.08730\\\hline
7 & 0.2227 & 1.991 & 1.876 & 1.193 & 0.6438 & 0.3433\\\hline
8 & 0.5131 & 3.000 & 2.486 & 1.709 & 1.075 & 0.7280\\\hline
9 & 0.9214 & 3.675 & 3.007 & 2.269 & 1.600 & 1.184\\\hline
10 & 1.609 & 4.275 & 3.563 & 2.799 & 2.087 & 1.634\\\hline
\end{tabular}
\vspace{5pt}
\caption{Average percent size reduction given by Algorithm~\ref{heuralg} from the size given by Algorithm~\ref{greedyalg}.}\label{table:sizeheur}
\end{table}

Following the same procedure, in Table~\ref{table:sizeheuropt} we compile evidence on the size increase of the covering given by Algorithm~\ref{heuralg} in relation to the size of an optimal covering. The reported percentage decreases have been averaged over at least $10^4$ independent random experiments.

\begin{table}[ht]\centering\begin{tabular}{|c|c|c|c|c|c|c|c|c|c|c|}
\hline
 $n$ & $\rho=0.1$ & $\rho=0.3$ & $\rho=0.5$ & $\rho=0.7$ & $\rho=0.9$ & $\rho=1.0$ \\\hline
5 & 0.00003333 & 0.0002666 & 0.003666 & 0.001333 & 0.00001333 & 0\\ \hline
6 & 0.001333 & 0.0250 & 0.1045 & 0.07500 & 0.03203 & 0.02680\\\hline
7 & 0.009333 & 0.3333 & 0.3250 & 0.1683 & 0.1602 & 0.1548\\\hline
8 & 0.06667 & 0.5167 & 0.4083 & 0.4844 & 0.3111 & 0.1940\\\hline
9 & 0.1333 & 0.8417 & 1.041 & 0.8906 & 0.8411 & 0.3667\\\hline
10 & 0.3333 & 1.493 & 1.601 & 1.396 & 1.210 & 0.9015\\\hline
\end{tabular}
\vspace{5pt}
\caption{Average percent size increase given by Algorithm~\ref{heuralg} with respect to the optimal size.}\label{table:sizeheuropt}
\end{table}

In Table~\ref{table:sizeheur}, we observe that our heuristic algorithm~\ref{heuralg} generally improves the greedy algorithm~\ref{greedyalg} for a small number of attributes, and that this improvement grows in the number of attributes and is larger for medium densities. In addition, in Table~\ref{table:sizeheuropt} we confirm that our heuristic algorithm generally provides near-to-optimal sized decompositions for a small number of attributes, and that much better results are achieved for small densities. In the case $n=5$ and $\rho=1$, our algorithms always provide optimal coverings.

\section{Conclusion}\label{sec:conclusions}

Recent data splitting research focuses in preserving the privacy of a sensitive data set by decomposing it into a small number of fragments.
In this context, data is split into a small number of fragments, frequently two or three. Since this does not usually suffice to ensure privacy,
existing solutions build cryptographic techniques on top of data splitting. However, up to this point no research has engaged with the data 
splitting problem in a setting where no other privacy-preserving techniques are to be used.

In this paper, we tackle the problem of addressing privacy concerns by finding a decomposition into fragments of a given data set. 
We also take into account processing constraints, which may impose some sets of data attributes to be stored together in the same fragment.
We first consider the problem of finding the optimal number of fragments needed to satisfy privacy and processing constraints, and we 
further remove the optimality condition to provide better efficiency.

Firstly, we present a formulation of the stated problem and a concrete approach to solve it. The data splitting problem is presented as 
a purely combinatorial problem, by specifying two families of subsets $\Forb$ and $\Auth$. The first family $\Forb$ represents privacy constraints,
and specifies sets of attributes that must not be stored together for privacy reasons. The second family $\Auth$ represents processing 
constraints, and specifies sets of attributes that must be stored together in the same fragment in order to speed up processing. 
In this setting, we introduce the notion of $(\Forb,\Auth)$-covering, and show that $(\Forb,\Auth)$-coverings directly translate to solutions of the 
privacy-preserving data splitting problem.

Once the combinatorial problem of finding $(\Forb,\Auth)$-covering is stated, we show that it can be solved by using purely algebraic techniques 
through its equivalence to a hypergraph-coloring problem. We thus exhibit an algebraic formulation of the data splitting problem, 
which translates privacy and processing constraints to a system of simultaneous equations. Through the use of Gr\"obner basis, this formulation 
allows the computation of optimally-sized data decompositions.

Since finding an optimal covering is an NP-hard problem, obtaining optimal solutions is often unfeasible in practice. We hence present a 
new greedy algorithm that sacrifices optimality for efficiency, achieving a polynomial running time in the size of the considered problem. 
We further present an heuristic improvement of this greedy algorithm, that provides smaller decompositions when the family of constraints is 
sparse enough.

A performance analysis is carried out to evaluate all of the presented solutions. First, we analyze the execution time of our first algebraic
approach in the context of a medical data set. Next, we report the execution times of our greedy and heuristic algorithms over random graphs, 
and we estimate the size overhead incurred by both algorithms with respect to the optimal size for a small number of attributes. The experimental results show that our greedy 
algorithm requires milliseconds to find a solution, whereas computing an optimal solution may require hours depending on the problem at hand.

\section*{Acknowledgments}

The present work was supported by the European Comission through H2020-ICT-2014-1-644024 ``CLARUS" and H2020-DS-2015-1-700540 ``CANVAS" and by the Government of Spain through TIN2014-57364-C2-1-R ``SmartGlacis" and TIN2016-80250-R ``Sec-MCloud".

\end{document}